\newcommand{\ovl}[1]{{\overline{#1}}}
\newcommand{\op}{\operatorname}
\renewcommand{\Re}{\operatorname{Re}}
\newcommand{\norm}[1]{\left\|{#1}\right\|}
\newcommand{\normno}[1]{\|{#1}\|}
\newcommand{\CC}{\mathbb C}
\newcommand{\RR}{\mathbb R}
\newcommand{\cB}{\mathcal{B}}
\newcommand{\cm}{\mathrm{cm}}
\newcommand{\s}{\mathrm{s}}
\newcommand{\dave}[1]{#1}
\newcommand{\DtN}{\operatorname{DtN}}
\newtheorem{lemma}{Lemma}
\newtheorem{claim}{Claim}
\newtheorem{theorem}{Theorem}
\newtheorem{proposition}{Proposition}
\theoremstyle{definition}
\newtheorem{remark}{Remark}
\newtheorem{definition}{Definition}
\newtheorem{example}{Example}
\def\@email#1#2{%
 \endgroup
 \patchcmd{\titleblock@produce}
  {\frontmatter@RRAPformat}
  {\frontmatter@RRAPformat{\produce@RRAP{*#1\href{mailto:#2}{#2}}}\frontmatter@RRAPformat}
  {}{}
}%
\begin{document}


\title[Extending the Droplet-Wave Statistical Correspondence in Walking Droplet Dynamics]{Extending the Droplet-Wave Statistical Correspondence in Walking Droplet Dynamics\vspace{0.5em}}
\author{S. Mao}
\author{D. Darrow*}%
 \email{ddarrow@mit.edu.}
\affiliation{Department of Mathematics, Massachusetts Institute of Technology, Cambridge, MA, USA
}%

\date{\today}

\begin{abstract}
Walking droplets---millimetric oil droplets that self-propel across the surface of a vibrating fluid bath---exhibit striking emergent statistics that remain only partially understood. In particular, in a variety of experiments, a robust correspondence has been observed between the droplet's statistical distribution and the time-average of the wave field that guides it. M.~Durey, P.~A.~Milewski, and J.~W.~M.~Bush, \emph{Chaos} \textbf{28}, 096108 (2018) \dave{rigorously established such a correspondence for single-droplet systems with a single, instantaneous droplet-bath impact during each vibration period, but numerical and experimental evidence suggests that the correspondence should hold far more broadly. Laboratory droplet systems, for instance, often exhibit complex bouncing modes that do not adhere to these hypotheses.} We attempt to complete this program in the present work, rigorously extending this statistical correspondence to account for \dave{arbitrary droplet-bath impact models, multi-droplet interactions, and non-resonant bouncing}. We investigate this correspondence numerically in systems of one and two droplets in 1-D geometries, and \dave{we highlight how} the time-averaged wave field can distinguish between correlated and uncorrelated pairs of droplets.
\end{abstract}

\maketitle

\begin{quotation}
    `Walking' oil droplets provide a compelling example of coupled particle-wave dynamics at the macro-scale, giving rise to rich, wave-like statistical patterns reminiscent of quantum systems. At the heart of many of these `hydrodynamic quantum analogues' is a widely-observed correspondence between the statistics of a walking droplet and the time-average of the wave it bounces on, \dave{which has been described as analogous to the \emph{Born rule} of quantum mechanics}. This correspondence has been proven rigorously under strict assumptions on the droplet-wave interaction, but experimental and numerical evidence suggests that it should hold more broadly. We attempt to complete this program in the present work, rigorously establishing such a correspondence for \dave{nearly arbitrary droplet models, multi-droplet interactions, and droplets bouncing out-of-phase with their guiding wave}.
\end{quotation}

\section{Introduction}
A millimetric oil droplet can be made to bounce on the surface of a vibrating fluid bath; with sufficiently fast vertical vibration, the droplet does not have enough time to deplete the thin air layer below it and coalesce into the bath. 
Under the right conditions, the droplet's self-generated waves can propel it forward, allowing it to travel across the surface of the bath \citep{Couder2005}. Such \emph{walking droplets} offer a striking example of coupled particle-wave dynamics at the macro-scale. Among other results, walking droplets have been shown to exhibit diffraction in analogue single- and double-slit experiments~\citep{Couder2006,Ellegaard2020,Pucci2018,Pucci2024}, to `tunnel' past submerged barriers~\citep{PhysRevLett.102.240401,Nachbin2017}, to attain quantized orbits~\citep{fort_2010,harris_bush_2014}, and to form regular \dave{lattice patterns}~\citep{Eddi_2009}.

In many of these experiments, walking droplets exhibit coherent statistical patterns despite their complex dynamics. In particular, in a wide variety of droplet-based `quantum analogues'~\citep{hqa}, the probability density function (PDF) of the droplet's position is observed to align closely with the time-average of its guiding wave at the start (or another fixed phase\footnote{Such a stroboscopic average is necessary, as the wave field vanishes when averaged over a continuous period.}) of each period of oscillation \citep{PhysRevE.88.011001,Saenz2018,durey_chaos,Durey_Milewski_Wang_2020,kutz_pilot-wave_2023,Abraham_2024,Evans_2025}:
\begin{equation}\label{eq:MWF}
    \ovl{\eta}(x) = \lim_{N\to\infty} \frac{1}{N}\sum_{n=0}^N \eta(x,nT_F),
\end{equation}
where $T_F$ is the Faraday period of the wave and $\eta(x,t)$ is the wave height field at time $t$. The object $\ovl{\eta}(x)$ is often called the \emph{mean wave field} (MWF) of the given system. Figure~\ref{fig:elliptic} shows the visual similarity between the PDF and the mean wave \emph{slope}---related to the MWF but easier to detect from an overhead camera~\citep{Saenz2018}---in an elliptical domain.

\begin{figure*}
     \centering
     \begin{subfigure}{0.33\textwidth}
         \centering
         \includegraphics[scale=0.5]{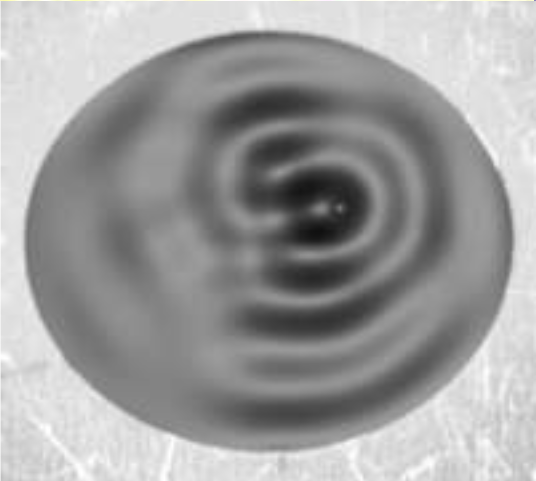}
         \caption{}
     \end{subfigure}%
     \begin{subfigure}{0.33\textwidth}
         \centering
         \includegraphics[scale=0.5]{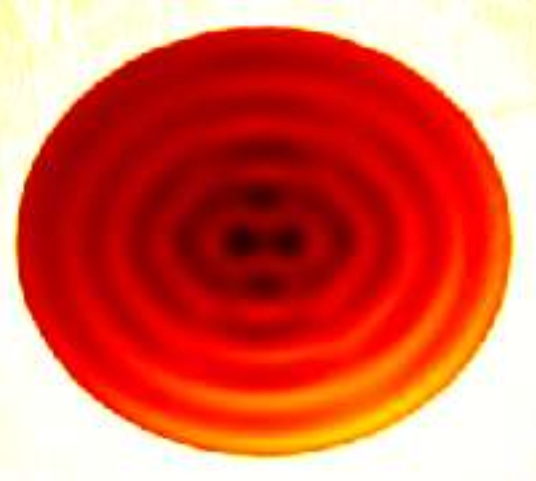}
         \caption{}
     \end{subfigure}%
     \begin{subfigure}{0.33\textwidth}
         \centering
         \includegraphics[scale=0.5]{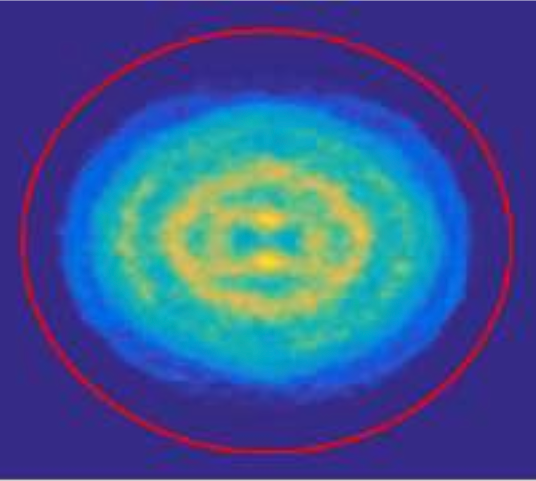}
         \caption{}
     \end{subfigure}
     \caption{\textbf{(a)} Snapshot of the instantaneous wave field of a walking droplet in an elliptical bath. Shading corresponds to the slope of the wave field, rather than its height, as the former is more easily detected from an overhead camera~\citep{Saenz2018}. \textbf{(b)} The time-averaged slope of the wave field, which is closely related to the time-averaged wave height (i.e., the \emph{mean wave field}). \textbf{(c)} Probability density function of the droplet, showing a strong visual similarity with the time-averaged wave slope. A correspondence between the probability density and mean wave field has been widely observed, but only proven under strict assumptions on the droplet-wave interaction. Images modified with permission from P. J. S\'aenz, T. Cristea-Platon, and J. W. M. Bush, ``Statistical projection
effects in a hydrodynamic pilot-wave system,'' Nature Physics 14, 315–319
(2018). Copyright 2018 Springer Nature.}
     \label{fig:elliptic}
\end{figure*}

\dave{An important step towards understanding this PDF-MWF correspondence was made by \citet{durey_chaos}. The authors were able to rigorously prove that a PDF-MWF correspondence must hold in single-droplet systems with a single, instantaneous droplet-bath impact at the start of each vibration period}. They find that if the horizontal droplet dynamics are ergodic (or periodic) with a stationary PDF $\rho(x)$ for the particle position, then the MWF $\bar{\eta}(x)$ satisfies
\begin{equation}
\bar{\eta}(x)=\int\eta_B(x,y)\rho(y)~dy,\label{eq:durey}
\end{equation}
where $\eta_B(x,y)$ is the MWF corresponding to a stationary bouncing droplet centered at $y$. \dave{The authors studied in depth the case of a uniform, unbounded fluid bath, and \citet{Durey_Milewski_Wang_2020} have applied their results similarly to investigate a droplet in a circular corral. More recently, \citet{Abraham_2024} have applied the same PDF-MWF correspondence to study the role that the mean wave field plays in the `Anderson localization' of walking droplets over randomized topography. In general, the relation~\eqref{eq:durey} has been described as analogous to the \emph{Born rule} of quantum mechanics~\cite{griffiths2017introduction,kutz_pilot-wave_2023}, which relates the spatial distribution $\rho(x)$ of a quantum particle's measured position to the profile of its wavefunction $\psi$:}
\begin{equation}\label{eq:born}
	\rho(x) = |\psi(x)|^2.
\end{equation}

\dave{Evidence suggests that a PDF-MWF correspondence holds more broadly than the hypotheses of Durey, Milewski, and Bush allow}. For instance, \citet{kutz_pilot-wave_2023} report a PDF-MWF correspondence in numerical simulations of inhomogeneous one-dimensional droplet systems with a finite droplet-bath impact time, and \citet{Evans_2025} have observed similar behavior when a droplet bounces \emph{aperiodically}, out of phase from its guiding wave. \dave{More basically, droplets in laboratory settings exhibit a wide variety of bouncing modes that violate these hypotheses, including modes with extended droplet-bath impact times and with multiple impacts per forcing period~\citep{molacek_bush_2013}. To better understand walking droplet statistics in practical settings---and to strengthen the analogy with Born's rule---we require a more general version of the correspondence~\eqref{eq:durey}.}

In the present work, we leverage the theory of \emph{evolution systems}~\citep{Pazy1963} to establish a PDF-MWF correspondence in a highly general setting, completing the program laid out by \citet{durey_chaos} and corroborating these experimental and numerical observations. We find the following result, stated rigorously in Appendix~\ref{sec:main} as Theorem~\ref{thm:main_infty}:
\begin{claim}\label{claim:main_heuristic}
    Fix a domain $\Omega\subset\RR^2$, and write $\eta:\Omega\to\RR$ for the time-evolving wave field. Suppose the system is being vibrated below the Faraday threshold, so the wave field decays exponentially in the absence of droplets. Suppose there are one or more walking droplets in $\Omega$, and that the wave-droplet interaction in each period of vibration\footnote{\dave{We refer here to the period at which the fluid bath is \emph{driven}, or, if desired, a multiple thereof; note that the wave field and vertical droplet dynamics can exhibit periodic dynamics at multiples of this driving period~\citep{molacek_bush_2013}.}} is parametrized \dave{(or well-approximated)} by a finite collection of parameters $\vec{\alpha}=\{\alpha_1,...,\alpha_N\}$; for instance, these parameters can quantify positions, velocities, or phase offsets of droplet bounces.
    
    If the \dave{droplet's trajectory exhibits a stationary probability distribution $\rho(\vec{\alpha})$, then the mean wave field $\ovl{\eta}(x)$ is given by}
    \begin{equation}\label{eq:main_heuristic}
        \ovl{\eta}(x) = \int_{\RR^N}\eta_B(x,\vec{\alpha})\rho(\vec{\alpha})\,d\vec{\alpha},
    \end{equation}
     where $\eta_B(x,\vec{\alpha}_0)$ is the mean wave field at $x\in\Omega$ corresponding to a constant bouncing configuration $\vec{\alpha}(t)\equiv\vec{\alpha}_0$. 
\end{claim}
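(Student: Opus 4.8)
The plan is to exploit the fundamental linearity of the sub-Faraday wave dynamics: although the droplet--wave coupling is a complicated, possibly nonlinear function of the configuration $\vec\alpha$, the wave field itself evolves under a \emph{linear}, exponentially decaying evolution, so the field at any instant is a superposition of the responses to every preceding period's impacts. Concretely, I would first use the evolution-system framework of \citet{Pazy1963} to write the field as a Duhamel-type sum $\eta(x,t)=\sum_{n:\,t_n\le t} w(x,\vec\alpha_n,\,t-t_n)$, where $\vec\alpha_n$ is the configuration during the $n$-th vibration period (of length $T_F$), $t_n$ is that period's impact time, and $w(x,\vec\alpha,\tau)$ is the field produced at lag $\tau\ge 0$ by a single period of impacts in configuration $\vec\alpha$, propagated forward by the decaying evolution family. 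The sub-Faraday hypothesis supplies the essential exponential bound $\norm{w(\cdot,\vec\alpha,\tau)}\lesssim e^{-\gamma\tau}$, uniformly over the bounded set of configurations visited, which is what makes every interchange below legitimate.

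Next I would introduce the single-period response functional
\begin{equation*}
g(x,\vec\alpha)\;=\;\int_0^\infty w(x,\vec\alpha,\sigma)\,d\sigma,
\end{equation*}
which converges absolutely by the exponential decay. Two facts about $g$ drive the argument. First, the frozen-configuration field $\vec\alpha(t)\equiv\vec\alpha_0$ tiles the time axis with copies of $w(x,\vec\alpha_0,\cdot)$ and relaxes to a $T_F$-periodic steady state whose per-period average is exactly $\eta_B(x,\vec\alpha_0)=g(x,\vec\alpha_0)/T_F$; this identifies the bouncing mean wave field with the response functional. Second, for the true trajectory, reversing the order of the time integral and the Duhamel sum shows that the total contribution of period $n$ to $\int_0^T\eta(x,t)\,dt$ is $\int_0^{T-t_n}w(x,\vec\alpha_n,\sigma)\,d\sigma$, which converges to $g(x,\vec\alpha_n)$ with a tail controlled uniformly by $e^{-\gamma(T-t_n)}$.

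I would then assemble the mean wave field. Writing $T=NT_F$ and using the uniform tail bound to discard the finite transient and boundary terms, the time average collapses to a Cesàro average over periods:
\begin{equation*}
\ovl{\eta}(x)=\lim_{T\to\infty}\frac1T\int_0^T\eta(x,t)\,dt=\lim_{N\to\infty}\frac{1}{N T_F}\sum_{n=1}^{N}g(x,\vec\alpha_n).
\end{equation*}
Applying Birkhoff's ergodic theorem to the observable $\vec\alpha\mapsto g(x,\vec\alpha)$ under the discrete period-to-period map, whose invariant measure is $\rho$, converts this Cesàro average into the phase-space average $\frac{1}{T_F}\int_{\RR^N}g(x,\vec\alpha)\rho(\vec\alpha)\,d\vec\alpha$. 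Substituting $\eta_B=g/T_F$ yields \eqref{eq:main_heuristic} for $\rho$-almost every trajectory.

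The main obstacle I anticipate is making the bookkeeping of the first three steps \emph{rigorous} in the function-space setting: one must show the evolution family generated by the sub-Faraday operator is well-posed and genuinely exponentially contractive in the relevant norm, that the per-period forcing functionals are uniformly bounded over the support of the dynamics (so the Duhamel sum and $g$ converge uniformly and Fubini applies), and that the $x$-dependence is controlled well enough to pass from norm convergence to the pointwise statement in \eqref{eq:main_heuristic}. Once these uniform exponential estimates are in hand, the interchange of the long-time average with the infinite superposition is routine, and the ergodic theorem does the remaining work.
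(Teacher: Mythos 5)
Your argument is correct, and it rests on the same two pillars as the paper's proof---the Pazy evolution-system framework for the $T$-periodic wave operator and the ergodic theorem applied to a bounded per-period observable---but it organizes the computation in a genuinely different way. The paper (Lemmas~\ref{lem:floquetbanach}--\ref{lem:inftyfloquetstep} and Theorem~\ref{thm:main_infty}) never unrolls the dynamics: it derives the one-step affine recursion $y((n+1)T)=C_0\,y(nT)+R(\vec{\alpha}_n)$ with $C_0=U(T,0)$, takes the Ces\`aro mean of both sides, applies the ergodic theorem only to the forcing term, and solves $(1-C_0)\langle y\rangle=\int R(\vec{\alpha})\rho(\vec{\alpha})\,d\vec{\alpha}$ by inverting $1-C_0$ (legitimate since $\normno{C_0}<1$); the identification $\eta_B=(1-C_0)^{-1}R$ then falls out of the Dirac-measure case for free. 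You instead sum the geometric series explicitly: the Duhamel superposition over all past periods together with the lag integral $g=\int_0^\infty w\,d\sigma$ plays exactly the role of $(1-C_0)^{-1}R$, and Birkhoff is applied directly to $g$. Your version buys physical transparency (the MWF as a superposition of exponentially decaying single-period responses) and yields the statement for the continuous-time average $\tfrac1T\int_0^T\eta\,dt$ rather than the stroboscopic average $N^{-1}\sum_n y(nT)$ used in the paper; the cost is the Fubini/tail bookkeeping you flag at the end, which the recursion-then-invert ordering avoids entirely. The technical obstacles you identify---well-posedness and $T$-translation invariance of the evolution family (i.e., $U(t+T,s+T)=U(t,s)$, which the paper extracts from \emph{uniqueness} of the evolution system), the contraction bound, and uniform boundedness of the per-period forcing---are precisely the hypotheses the paper isolates in Proposition~\ref{prop:criteria} and Theorem~\ref{thm:main_infty}, so your plan closes the same gaps in a different order.
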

\dave{We note that one does not necessarily need to resolve the full droplet dynamics for all time in order to apply this result. For instance, a common approximation is to assume that the droplet's vertical dynamics take a prescribed form, and that each droplet impact is well-localized at a position $x_p\in\Omega$. Under this approximation, Claim~\ref{claim:main_heuristic} would involve only the spatial probability distribution $\rho(x_p)$ of the droplet, much like the theorem of \citet{durey_chaos}. We discuss other cases of interest in Section~\ref{sec:apps}, including multi-droplet systems and non-resonant droplet bouncing.}


\dave{It is also worth mentioning that we have dropped the hypothesis of \emph{ergodicity} used by \citet{durey_chaos}. As discussed in their work, it has been widely observed that confined droplet systems often give rise to stationary probability distributions for the droplet's position~\citep{perrard_2014,PhysRevE.88.011001,Saenz2018}, but do not always exhibit global ergodicity. In short, the stationary probability distribution exhibited by a particular droplet's trajectory may depend on its initial condition. Fortunately, the \emph{ergodic decomposition theorem} states that, under very general conditions, any stationary distribution of a dynamical system can be decomposed as a sum over ergodic subsets of the global phase space~\citep{viana2016foundations}; consequently, any particular, generic droplet trajectory will tend toward one of these ergodic subsets, and the ergodic theorem can be applied as before. Notably, this argument accounts for the two particular cases studied by Durey, Milewski, and Bush: global ergodicity and periodic cycles}.

\dave{In Section~\ref{sec:model}, we review the mathematical models used for walking droplets, and we discuss how our results apply to the full range of droplet models. In Section~\ref{sec:numerics}, we support Claim~\ref{claim:main_heuristic} numerically using the one-dimensional droplet model of \citet{PhysRevFluids.7.093604}, which involves a finite-time droplet-bath impact}. We verify that~\eqref{eq:durey} holds for single droplet systems in different geometries and with different vibration amplitudes, and we demonstrate in a two-droplet system how the mean wave field can be used to study the correlation between multiple droplets. \dave{In Section~\ref{sec:apps}, we discuss key applications of Claim~\ref{claim:main_heuristic} to current themes in walking droplet research, including multi-droplet statistics, refined models of the droplet-bath interaction, and droplets bouncing out-of-phase with their guiding wave}.  We prove Claim~\ref{claim:main_heuristic} rigorously in Appendix~\ref{sec:main}, first illustrating our argument in the setting of finite-dimensional ODEs and then extending to the full PDE case. There, we also highlight in Example~\ref{ex:molacek} how the claim applies rigorously to the droplet model of \citet{milewski_galeano-rios_nachbin_bush_2015}.

\begin{figure*}
    \centering
    \begin{subfigure}{0.33\textwidth}
        \centering
        \includegraphics[width=\linewidth]{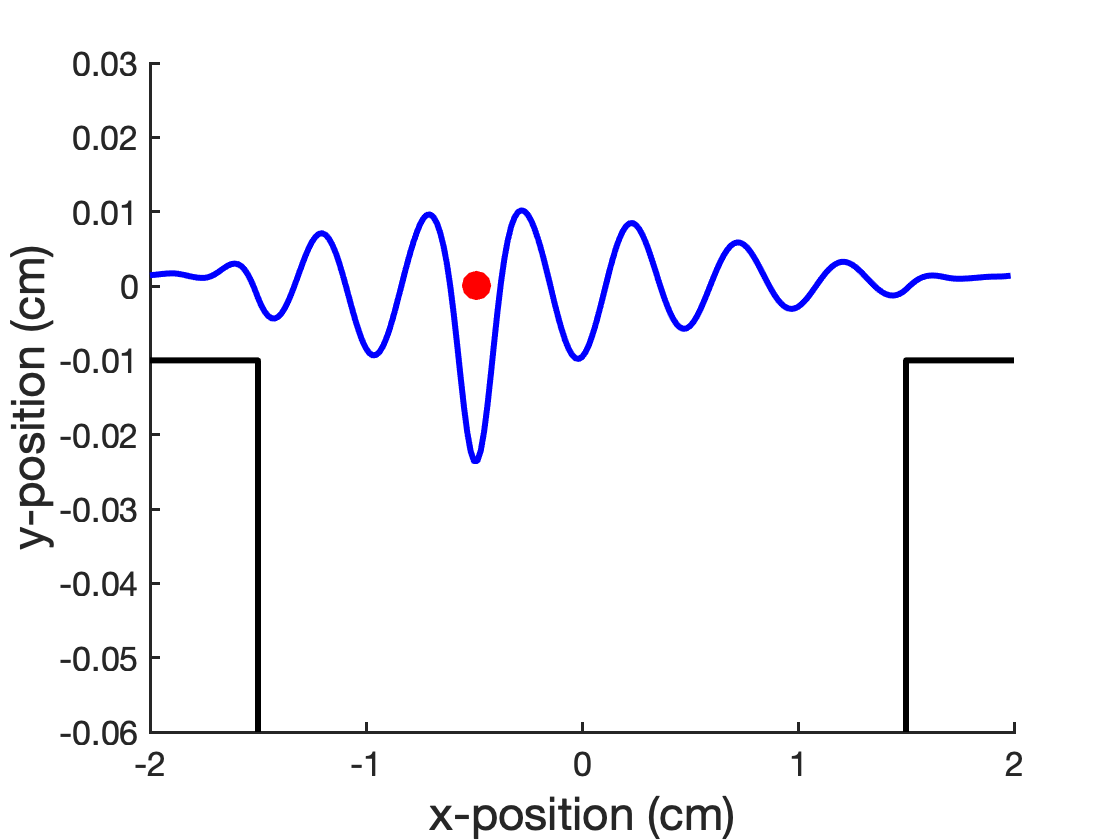}
        \caption{}
        \label{fig:snapshot}
    \end{subfigure}
    \begin{subfigure}{0.33\textwidth}
        \centering\includegraphics[width=\linewidth]{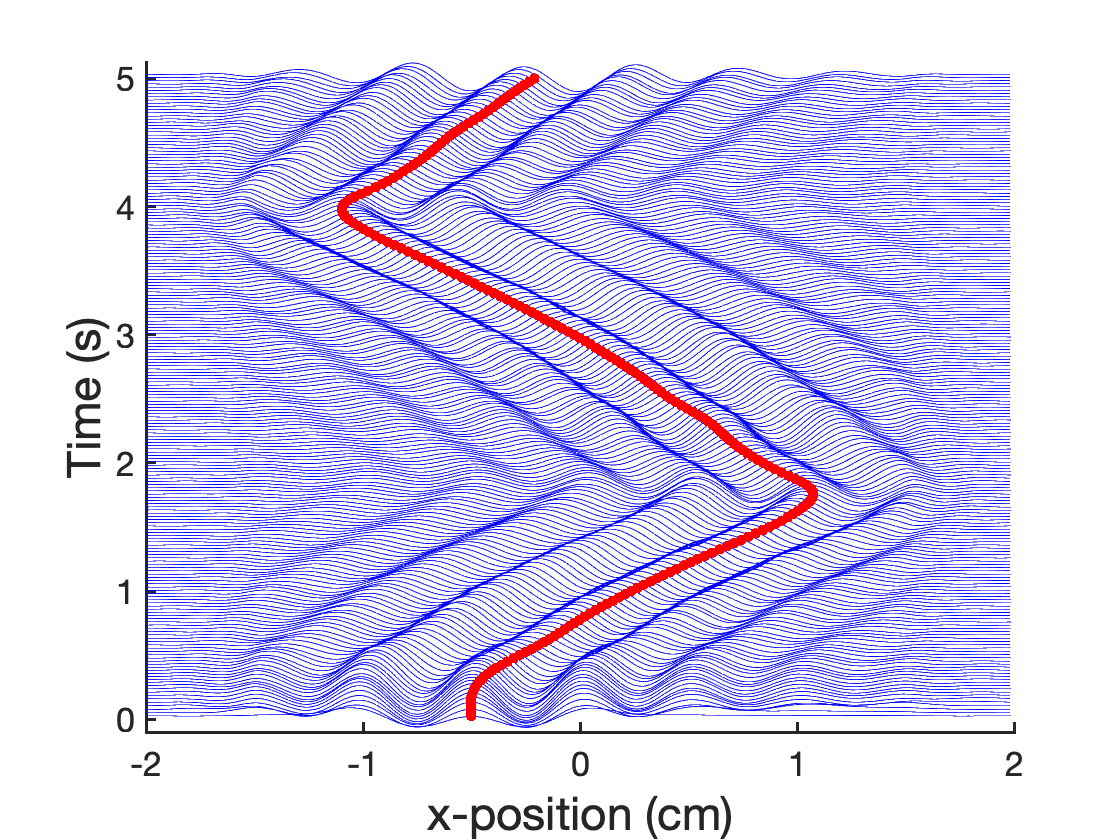}
        \caption{}\label{fig:trajectory}
    \end{subfigure}%
    \begin{subfigure}{0.33\textwidth}
        \centering\includegraphics[width=\linewidth]{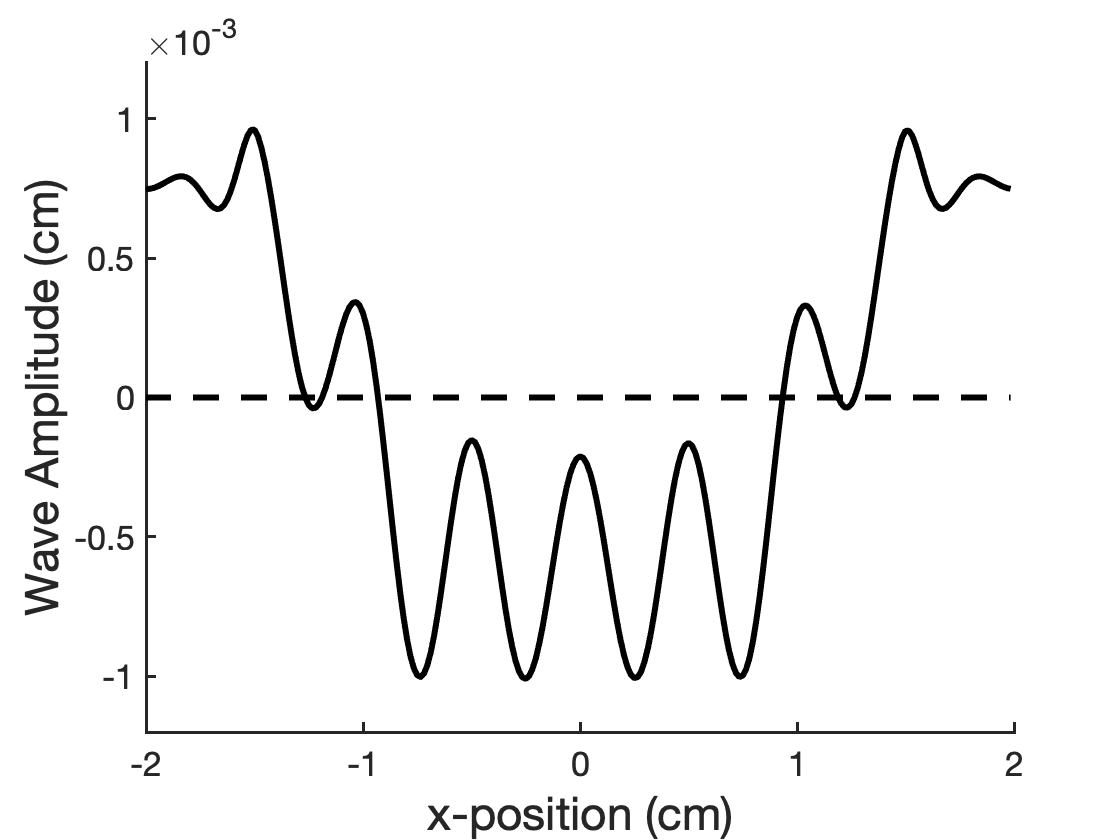}
        \caption{}\label{fig:exdropC}
    \end{subfigure}
    \caption{\textbf{(a)} A snapshot of the droplet model, showing the bath topography (black, partially shown), the wave field (blue), and the droplet (red). \textbf{(b)} A droplet (red) starts at $-0.5~\mathrm{cm}$ and traverses the $3~\mathrm{cm}$ domain with memory parameter $\Gamma/\Gamma_F=0.85$. Snapshots of the underlying wavefield (blue) are shown at each Faraday period. \textbf{(c)} The long-term mean wave field (black) is calculated by averaging the value of the wave field at the start of each Faraday period.}
    \label{fig:exdrop}
\end{figure*}

\section{Background on Walking Droplet Modeling}\label{sec:model}

\dave{In the present section, we review the various analytical models that have been developed for the walking droplet system, and discuss how Claim~\ref{claim:main_heuristic} applies to each of them.}

Although early models attempted to reduce the walking droplet system to a single ODE for the droplet's horizontal motion \citep{Couder2005, protiere2006}, more refined models account for the \emph{path memory} of the droplet~\citep{Eddi2011}, i.e., how the droplet's current motion depends on the waves generated by its previous bounces. One method of approach, carried out by \citet{Eddi2011}, \citet{MolacekBush2012}, and \citet{Oza_Rosales_Bush_2013}, is to fold the contribution of the guiding wave directly into the trajectory equation, recovering a nonlinear \emph{integro-differential} equation for the system. Recent models have taken such an approach to account for non-resonant droplet bouncing~\citep{Primkulov_2025}, as discussed in Section~\ref{sec:apps}. 

\dave{Although not manifest in their evolution equations, these models typically make predictions for the wave field consistent with the results presented here. As a simple example, consider the \emph{stroboscopic} model of \citet{Oza_Rosales_Bush_2013}, where the wave height field is given by
\[h(x,t) = \sum_{n=-\infty}^{\lfloor t/T_F \rfloor}
      AJ_0\big(k_F |x - x_p(nT_F)|\big)
      e^{-(t - nT_F)/\tau}
    .\]
Here, $T_F$ and $k_F$ are the Faraday period and wavenumber, respectively, $x_p$ is the horizontal position of the particle, and $\tau>0$ is a decay timescale for the wave field. It is not hard to verify that $h$ satisfies the PDE
\begin{gather*}
    \partial_t h = -\tau^{-1}h + P_0,\\
    P_0(x,t)=\sum_n A\delta(t-nT_F)\,J_0(k_F|x-x_p(nT_F)|),
\end{gather*}
writing $\delta(t-t_0)$ for the Dirac delta function at a given time point $t_0$. This PDE satisfies the rigorous hypotheses\footnote{\dave{Indeed, the instantaneous droplet-bath impact utilized here does \emph{not} satisfy our hypotheses as stated, but it is straightforward to adapt the theorem to allow for instantaneous impacts, by combining our argument with that of \citet{durey_chaos}.}} laid out in Appendix~\ref{sec:main}, so we deduce the PDF-MWF correspondence stated in Claim~\ref{claim:main_heuristic}.}

A more flexible approach \dave{to walking droplet modeling}, introduced by~\citet{milewski_galeano-rios_nachbin_bush_2015}, is to explicitly model the motion of the bath surface. They proposed the following set of quasi-potential-flow equations~(adapted from \citet{dias2008}) for the free-surface displacement $\eta$ and velocity potential $\phi$ of the oil bath:
\begin{equation}\label{eq:dureyeqs}
    \begin{gathered}
        \partial_t\phi = -g(t)\eta+2\nu\nabla^2\phi+\frac{\sigma}{\rho}\nabla^2\eta-\frac{P_0}{\rho},\\
        \partial_t\eta=\DtN(\phi)+2\nu\nabla^2\eta,
    \end{gathered}
\end{equation}
with appropriate boundary conditions. Here, $\nu$, $\sigma$, and $\rho$ represent fluid viscosity, surface tension, and density, respectively, and $P_0=P_0(x,t)$ is the pressure imposed by the droplet on the bath, nonzero only when the droplet and bath are in contact. The coefficient $g(t) = -g(1-\Gamma\cos(4\pi t/T_F))$ represents vertical shaking of the bath, with gravitational acceleration $g$ and non-dimensionalized vibration amplitude $\Gamma$. \dave{The operator $\operatorname{DtN}$ is the geometry-dependent Dirichlet-to-Neumann map for the bath~\citep{Quarteroni1991}, which maps the velocity potential $\phi$ to the vertical fluid velocity $w=\phi_z$ at the surface of the bath}. 

\dave{As a note, the vibration amplitude $\Gamma$ is typically fixed slightly below the Faraday threshold $\Gamma_F$, above which the bath would destabilize into a field of Faraday waves~\citep{benjaminursell}. Below $\Gamma_F$, the droplet's bouncing can still excite waves at the Faraday scale; the `memory parameter' $\Gamma/\Gamma_F$ determines the decay rate of such waves~\citep{Eddi2011}.}

\dave{In order to model the full system, the wave field equation \eqref{eq:dureyeqs} must be coupled with an appropriate trajectory equation for the droplet. Different trajectory equations are used for different purposes; that of \citet{molacek_bush_2013} accurately models the droplet's vertical bouncing, for instance, where that of \citet{Nachbin2017} abstracts away the vertical dynamics and assumes a periodic droplet-bath impact of fixed duration. In Example~\ref{ex:molacek} of Appendix~\ref{sec:main}, we prove that Claim~\ref{claim:main_heuristic} applies to models invoking the equation \eqref{eq:dureyeqs} \emph{regardless} of the trajectory equation. }

Alternatives to~\eqref{eq:dureyeqs} have been proposed by various authors. Notably, \citet{Faria_2016} \dave{posited an ad-hoc, approximate} form of the Dirichlet-to-Neumann operator \dave{for piecewise-constant bath topographies}, which has seen wide success in numerical simulations~\citep{Pucci2016diffract,Saenz2018,Harris2018}. In a different direction, \citet{Nachbin2017} were able to treat the operator exactly in one-dimensional baths by studying the system under a conformal map. \dave{Claim~\ref{claim:main_heuristic} applies as before to all of these alternatives; in fact,} we employ the model of \citet{Nachbin2017} in our numerical investigations in Section~\ref{sec:numerics}.

\begin{table}
    \centering\footnotesize
    \begin{tabular}{ccc}
        Variable & Value & Description \\ 
        \hline\hline\vspace{-1em}\\
        $g$ & $981~\cm~\s^{-2}$ & Gravity\\
        $\omega$ & $80\times 2\pi~\mathrm{rad}~\s^{-1}$ & Vibration frequency\\
        $T_F$ & $0.025~\s$ & Faraday period \\
        $\sigma$ & $20.9~\mathrm{dyn}~\cm^{-1}$ & Surface tension\\
        $\rho$ & $0.95~\mathrm{g}~\cm^{-3}$ & Fluid density\\
        $\nu$ & $0.16~\cm^2~\s^{-1}$ & Fluid viscosity\\
        $c$ & $0.25$ & Drag-to-force ratio~\eqref{eq:drag}\\
        $R_0$ & $0.035~\cm$ & Droplet radius\\
    \end{tabular}
    \vspace{0.1in}
    \caption{Physical parameters used in our simulations, following~\citet{PhysRevFluids.7.093604}.}
    \label{tab:constants}
\end{table}

\section{Numerical PDF-MWF Correspondence}\label{sec:numerics}

We here explore the consequences of Claim~\ref{claim:main_heuristic} numerically. Namely, we investigate the PDF-MWF correspondence within the one-dimensional model of \citet{Nachbin2017}, which exactly resolves the topography of the fluid bath using an appropriate conformal map. \dave{Following Nachbin, Milewski, and Bush, we make use of the wave equation \eqref{eq:dureyeqs} in 1-D, along with the following horizontal trajectory equation\footnote{\dave{We note that the present simulations (as well as those of Nachbin, Milewski, and Bush) evolve the droplet according to the gradient of the true, instantaneous wave field. An alternate approach was taken by~\citet{molacek_bush_2013} and \citet{milewski_galeano-rios_nachbin_bush_2015}, employing a `hypothetical' wave field not affected by the current droplet impact.}}:
\begin{equation}\label{eq:trajectory}
	m\ddot{x}_p + D(t)\dot{x}_p = -F(t)\nabla\eta(x_p,t),
\end{equation}
where $x_p$ is the horizontal droplet position, $m$ is its mass, $D(t)$ is a time-dependent drag term, and 
\[F(t) = \frac{8\pi^2 mg}{\omega T_F}\mathbf{1}_{0<t<T_F/4}(t)\sin(4\pi t/T_F)\] 
is a time-dependent forcing, nonzero only when the droplet and bath are in contact~\citep{molacek_bush_2013}. Neglecting air resistance, we fix
\begin{equation}\label{eq:drag}
	D(t) = c\sqrt{\frac{\rho R_0}{\sigma}}F(t),
\end{equation}
where $c>0$ is a scaling parameter and $R_0$ is the droplet radius; we note that this form of $D(t)$ was first used by \citet{molacek_bush_2013}. We use the same parameters as \citet{PhysRevFluids.7.093604}, reported in Table~\ref{tab:constants}; note that the droplet mass $m$ drops out of our equations with the aforementioned choices of $F(t)$ and $D(t)$.}

\begin{figure*}[t]
    \centering
    \begin{subfigure}{.33\textwidth}
        \caption{$\Gamma/\Gamma_F=0.75$, $3~\cm$ domain}
        \includegraphics[width=.9\linewidth,clip, trim=0 1cm 0 0]{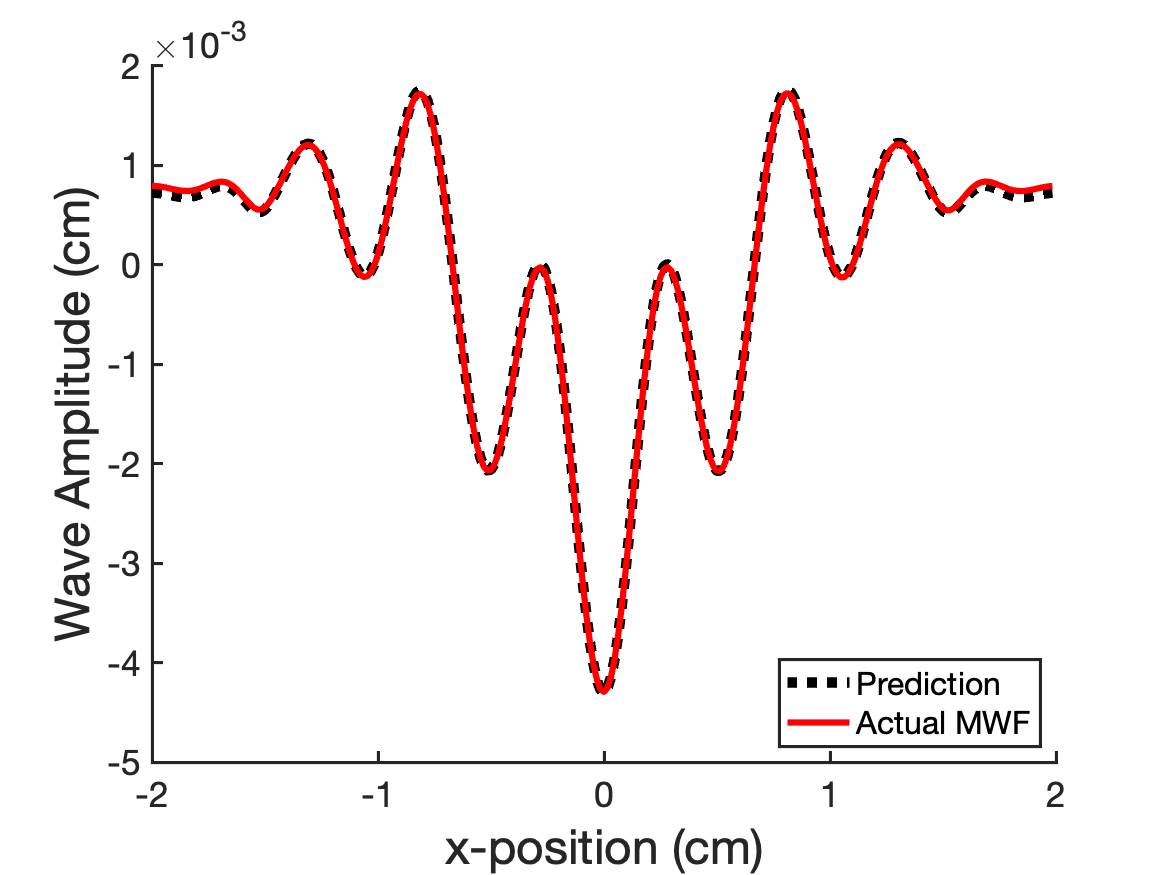}
    \end{subfigure}%
    \begin{subfigure}{.33\textwidth}
        \caption{$\Gamma/\Gamma_F=0.80$, $3~\cm$ domain}
        \includegraphics[width=.9\linewidth,clip, trim=0 1cm 0 0]{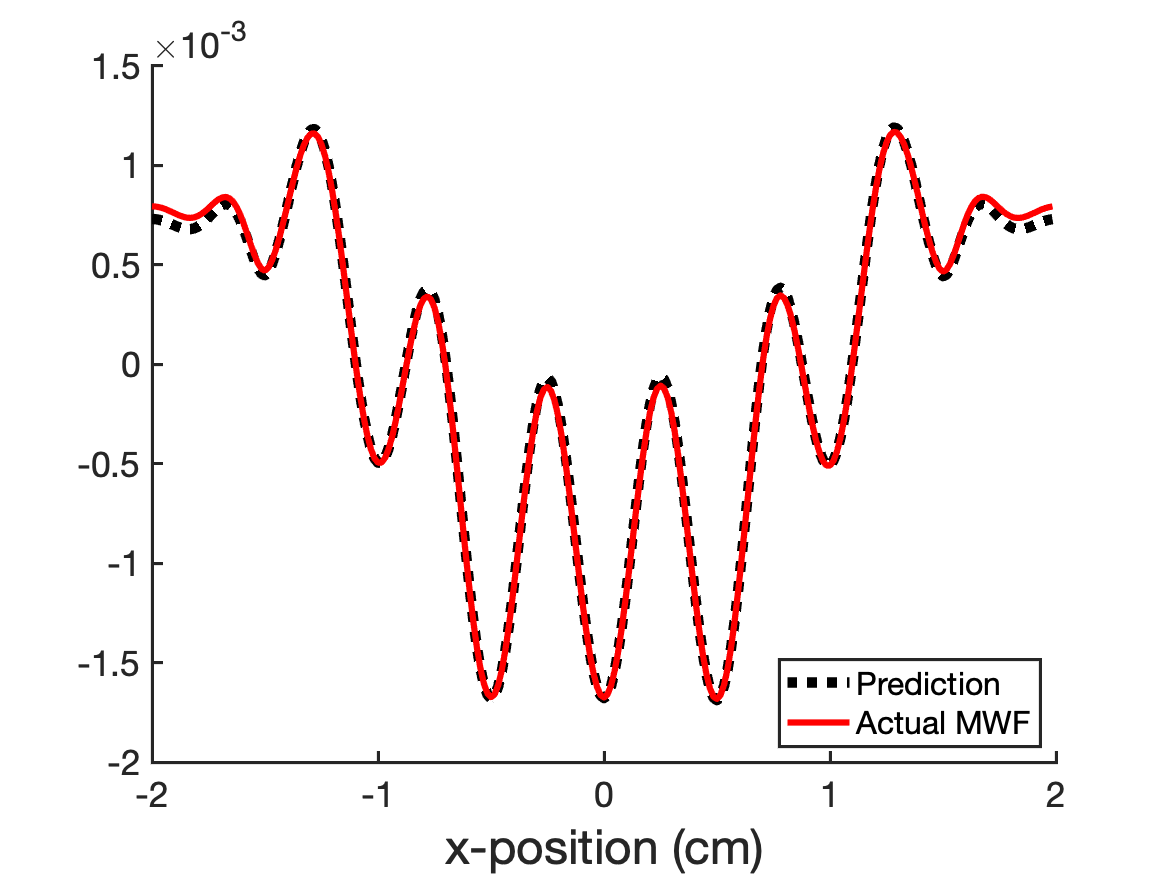}
    \end{subfigure}%
    \begin{subfigure}{.33\textwidth}
        \caption{$\Gamma/\Gamma_F=0.85$, $3~\cm$ domain}
        \includegraphics[width=.9\linewidth,clip, trim=0 1cm 0 0]{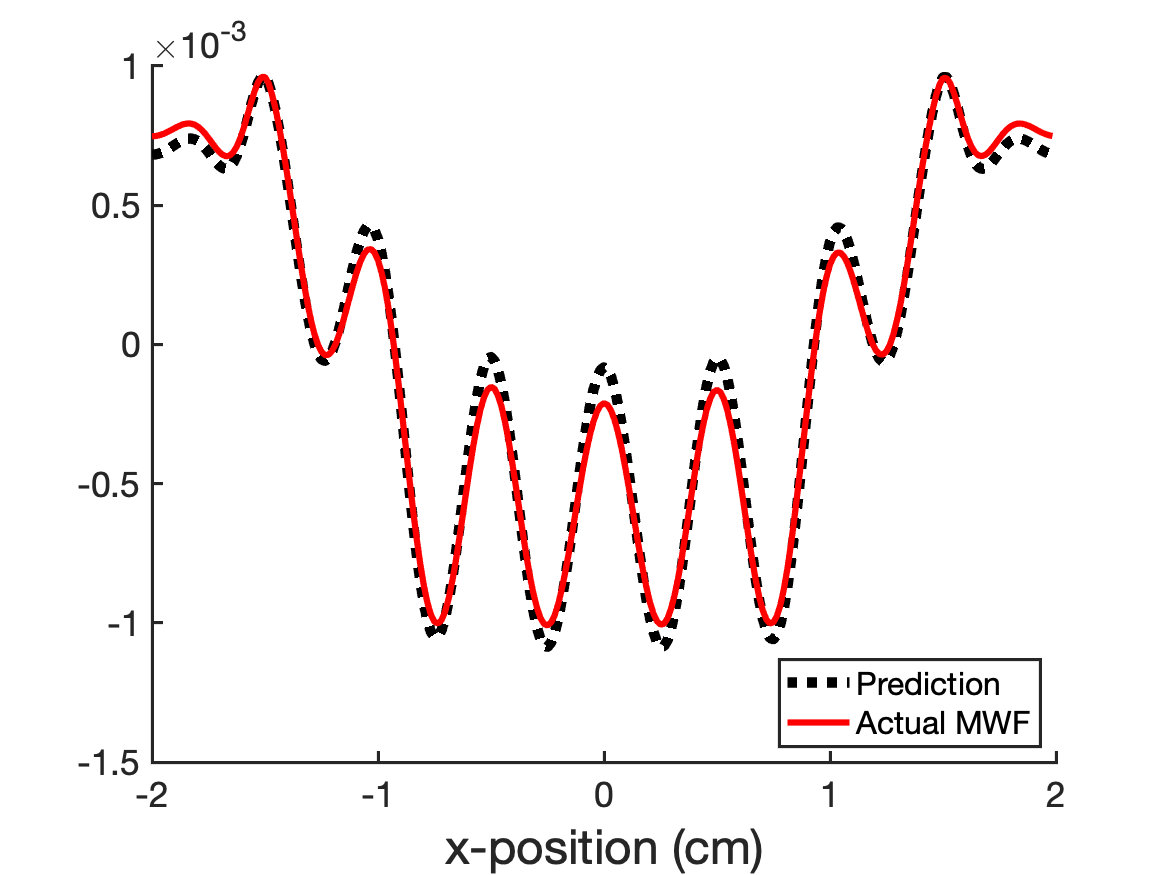}
    \end{subfigure}
    \begin{subfigure}{.33\textwidth}
        \includegraphics[width=.9\linewidth]{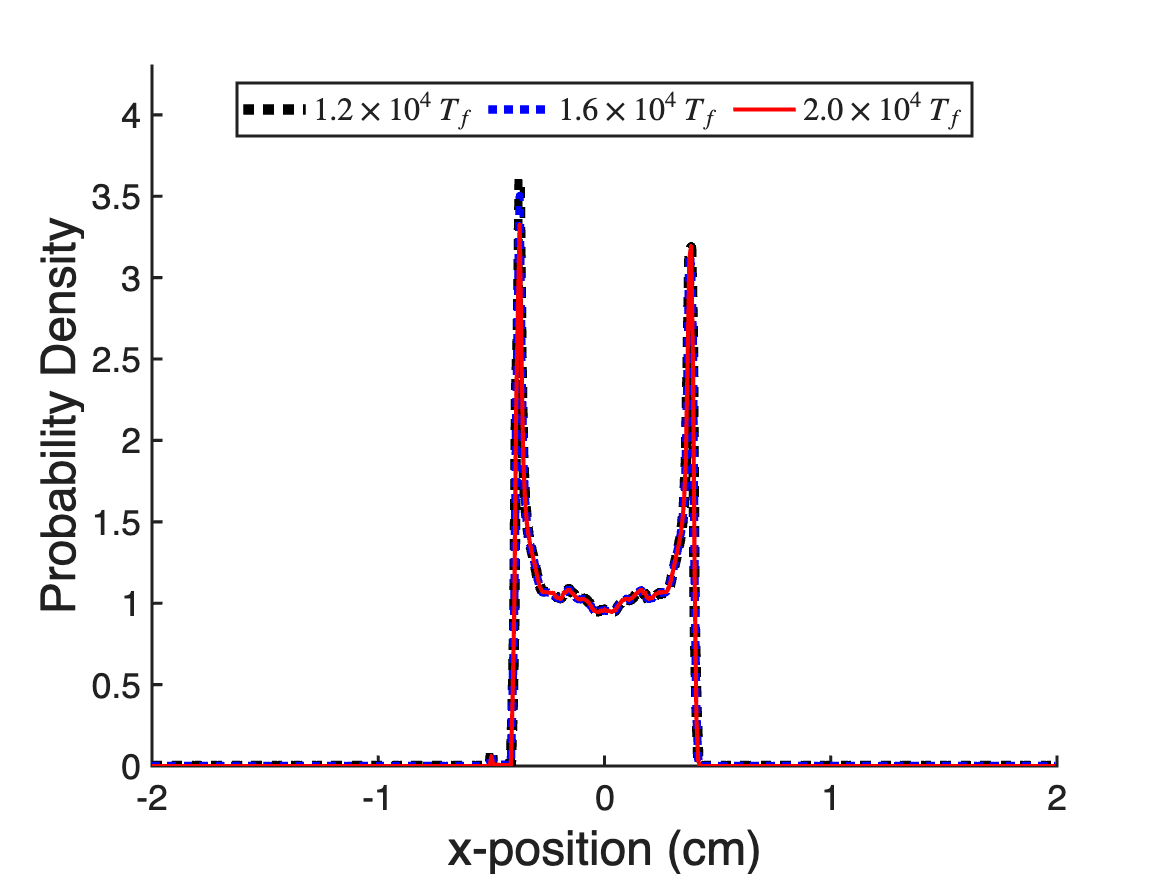}
    \end{subfigure}%
    \begin{subfigure}{.33\textwidth}
        \includegraphics[width=.9\linewidth]{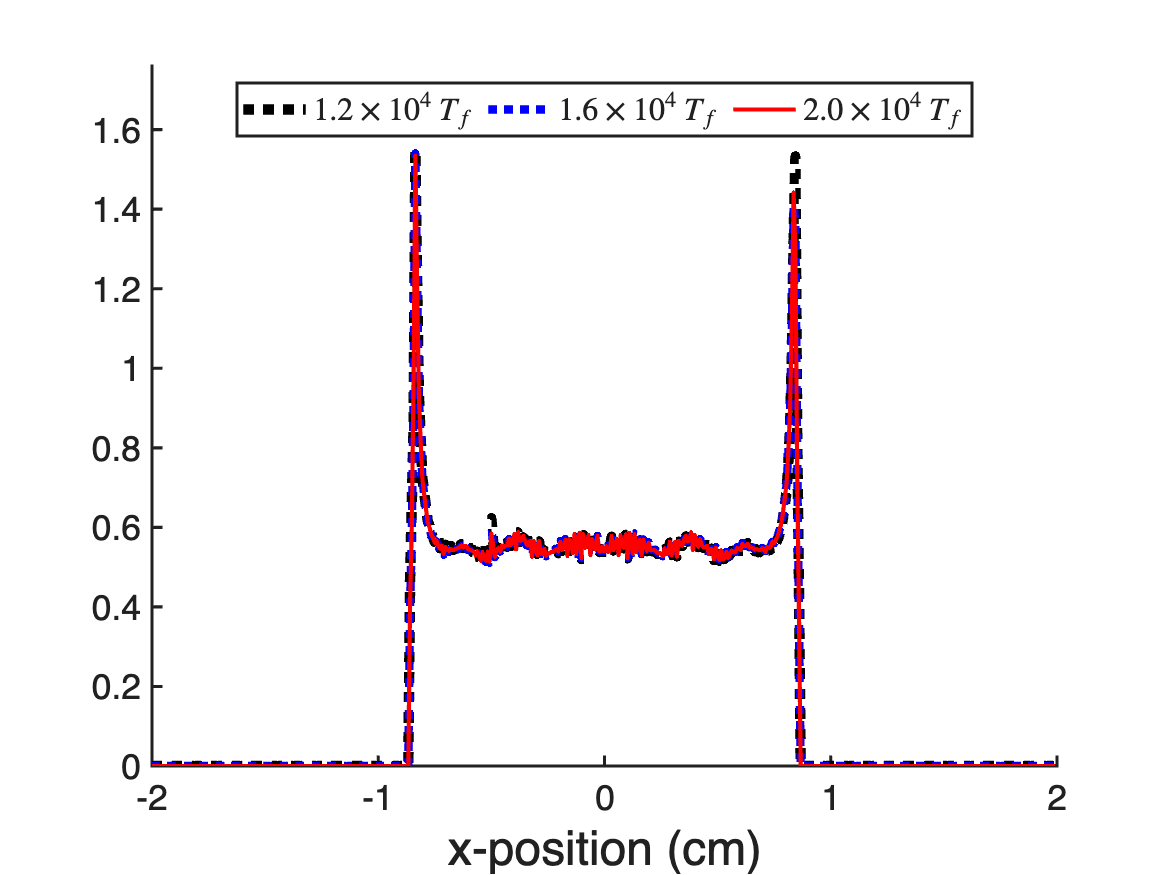}
    \end{subfigure}%
    \begin{subfigure}{.33\textwidth}
        \includegraphics[width=.9\linewidth]{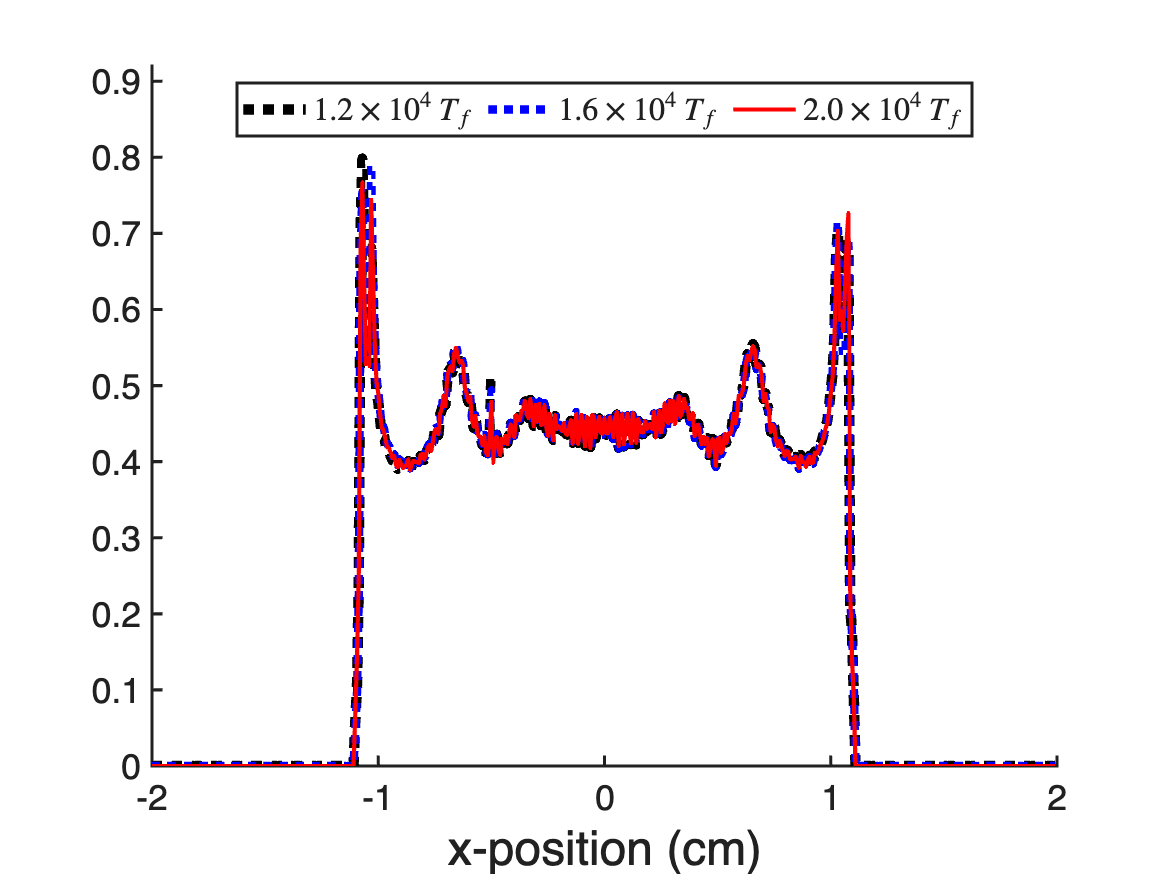}
    \end{subfigure}

    \bigskip\medskip
    
    \begin{subfigure}{.33\textwidth}
        \caption{$\Gamma/\Gamma_F=0.75$, $4~\cm$ domain}
        \includegraphics[width=.9\linewidth,clip, trim=0 1cm 0 0]{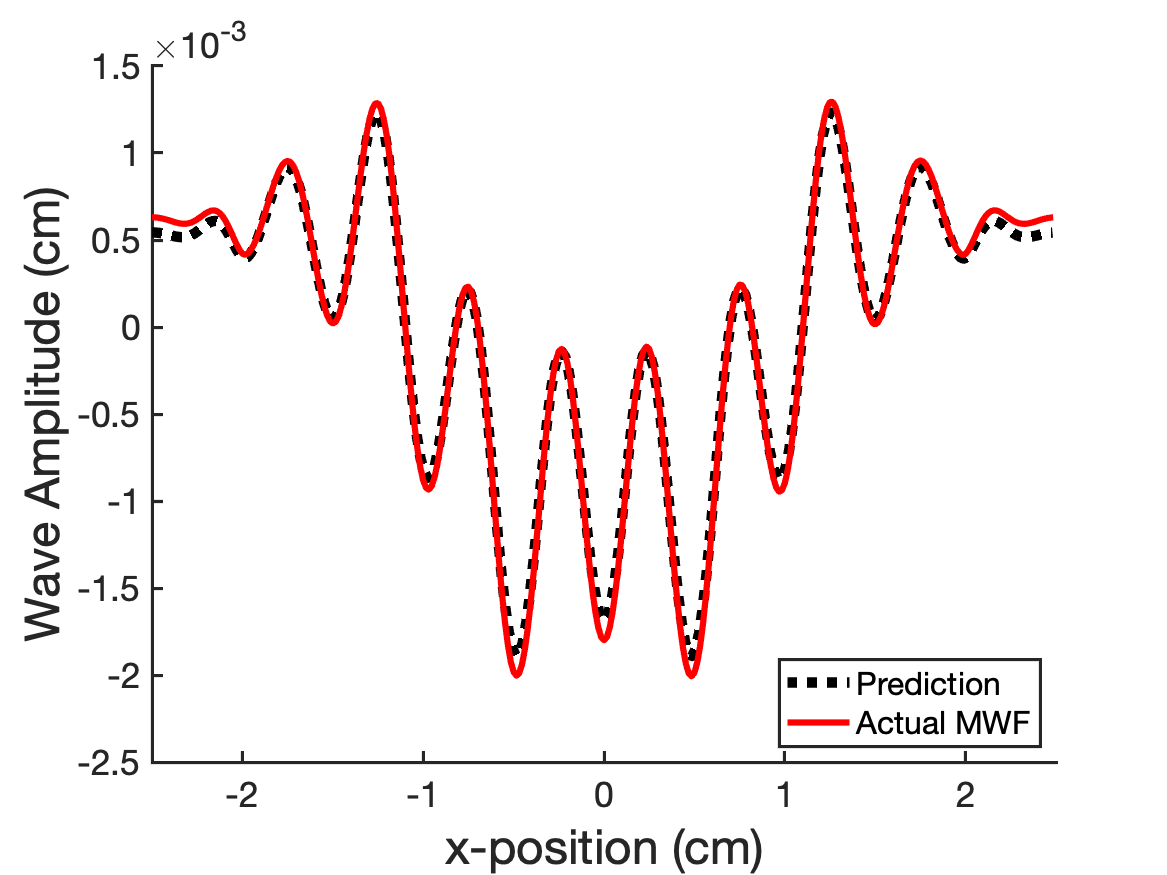}
    \end{subfigure}%
    \begin{subfigure}{.33\textwidth}
        \caption{$\Gamma/\Gamma_F=0.80$, $4~\cm$ domain}
        \includegraphics[width=.9\linewidth,clip, trim=0 1cm 0 0]{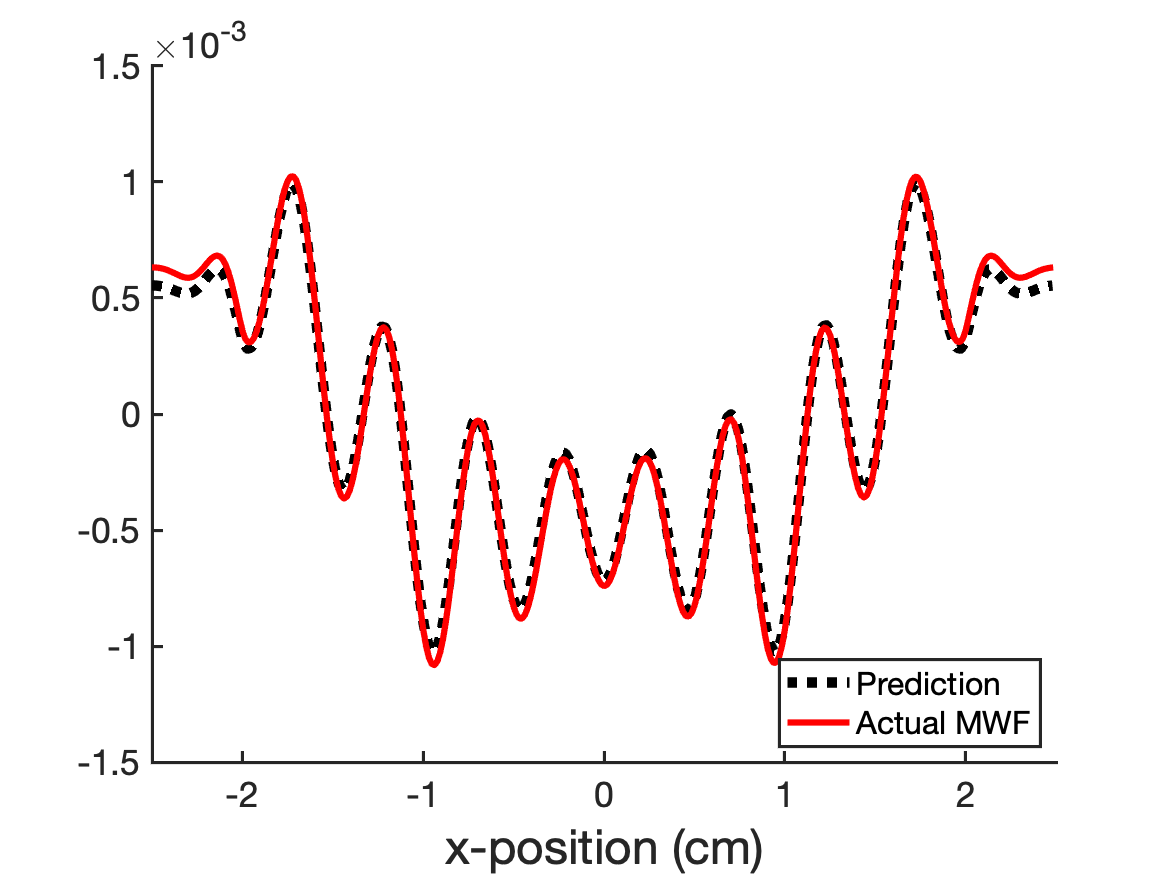}
    \end{subfigure}%
    \begin{subfigure}{.33\textwidth}
        \caption{$\Gamma/\Gamma_F=0.85$, $4~\cm$ domain}
        \includegraphics[width=.9\linewidth,clip, trim=0 1cm 0 0]{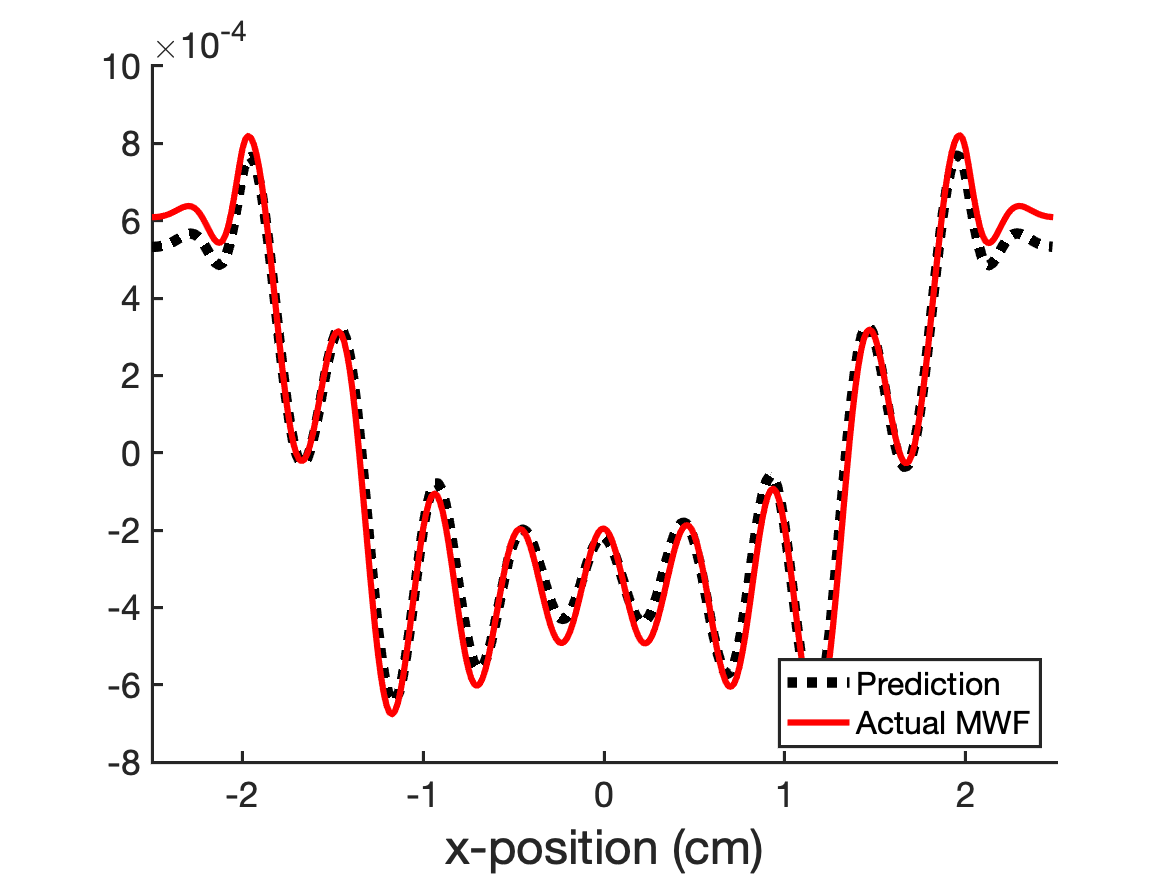}
    \end{subfigure}
    \begin{subfigure}{.33\textwidth}
        \includegraphics[width=.9\linewidth]{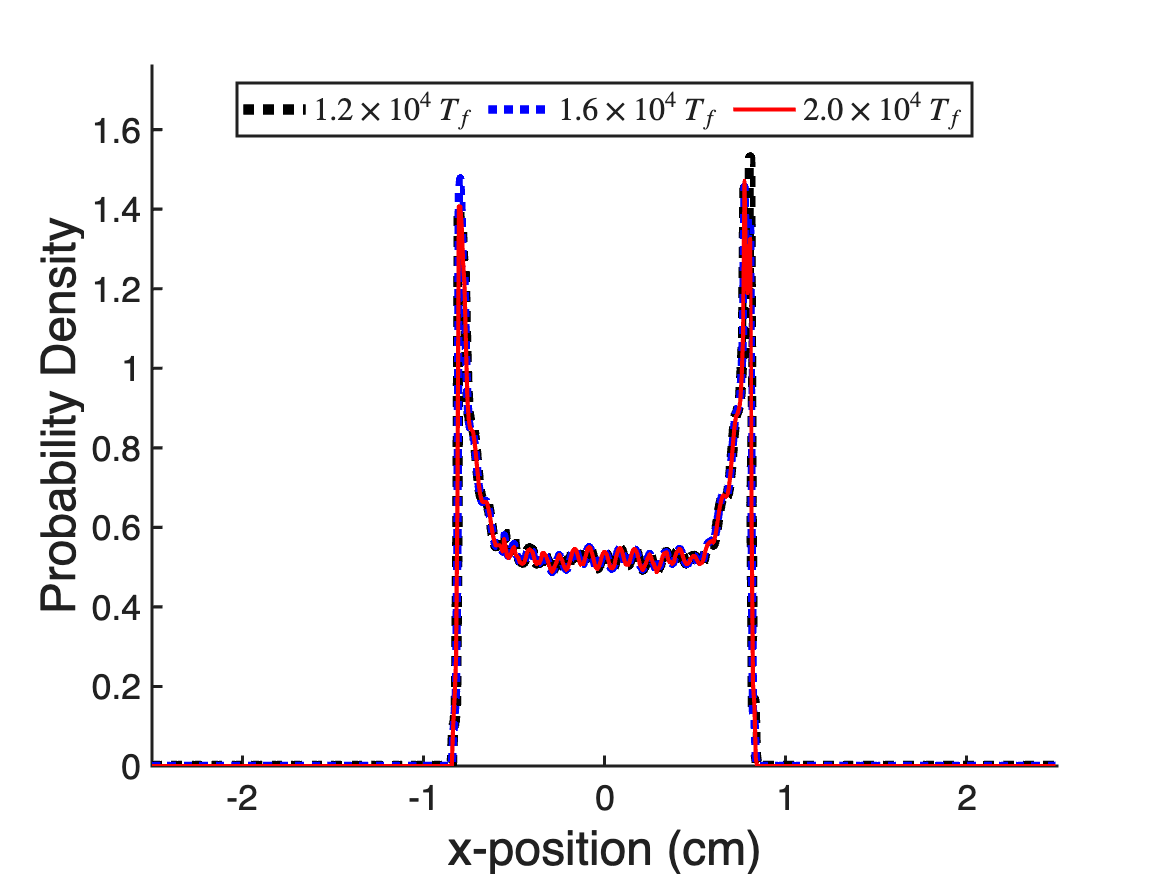}
    \end{subfigure}%
    \begin{subfigure}{.33\textwidth}
        \includegraphics[width=.9\linewidth]{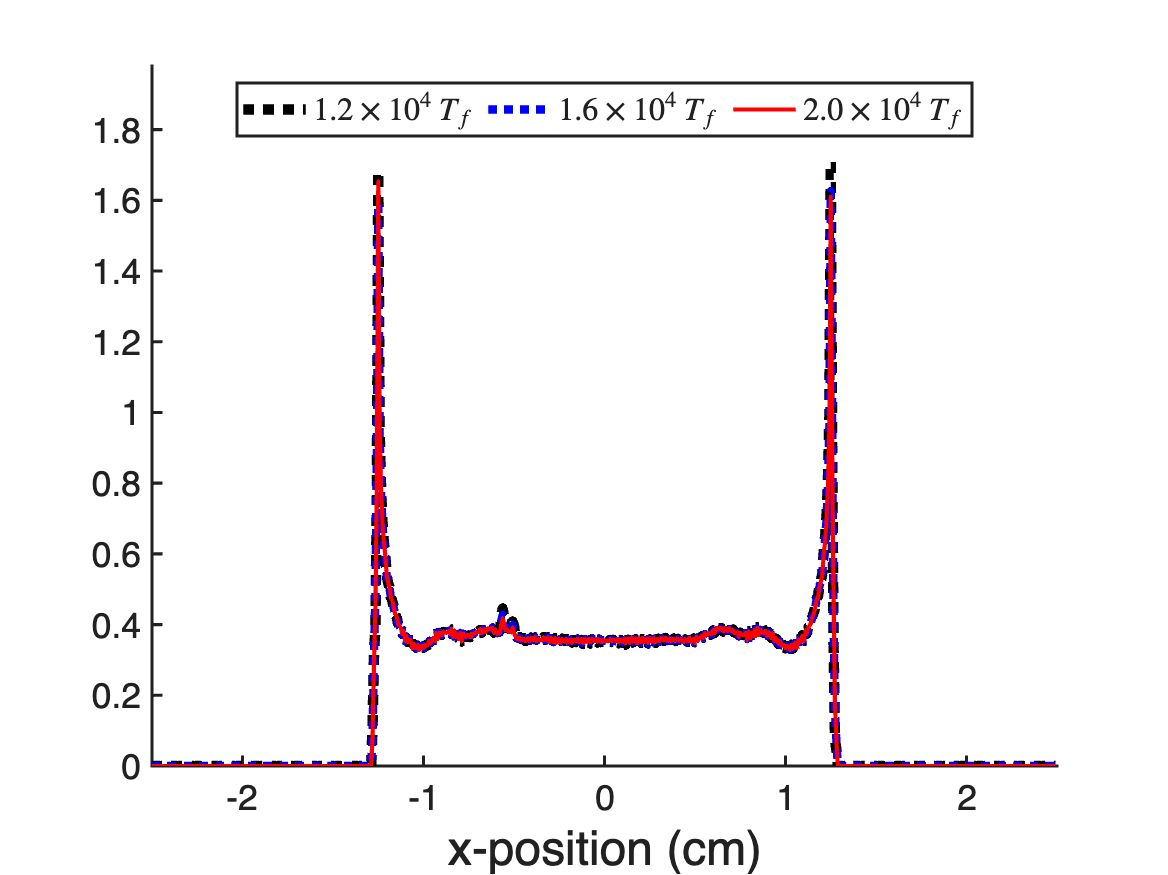}
    \end{subfigure}%
    \begin{subfigure}{.33\textwidth}
        \includegraphics[width=.9\linewidth]{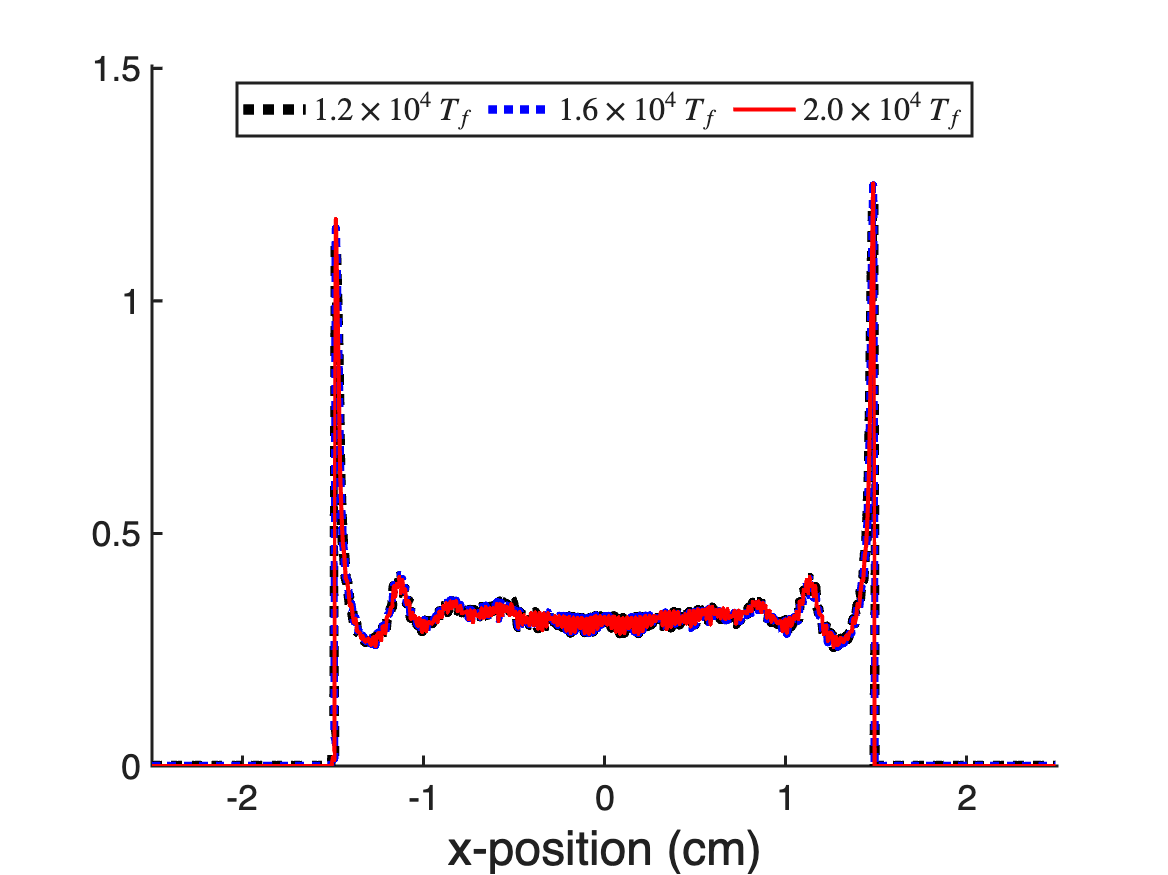}
    \end{subfigure}
    \caption{\dave{Mean wave fields (MWFs) and droplet position PDFs for six simulations, stopped after $2\times 10^4$ Faraday periods $T_F$. In each, a single droplet moves around within a rectangular cavity of width \textbf{(a-c)} $3~\cm$ and \textbf{(d-f)} $4~\cm$, with various memory parameters $\Gamma/\Gamma_F$. For each MWF, we compare the measured MWF against the prediction of Claim~\ref{claim:main_heuristic}, noting a close match in each case. We show each PDF calculated at times $1.2\times 10^4\,T_F$, $1.6\times10^4\,T_F$, and $2.0\times 10^4\,T_F$ to demonstrate convergence.}}
    \label{fig:mwfcomp}
\end{figure*}

\begin{figure*}
    \centering
    \begin{subfigure}{0.33\textwidth}
        \centering
        \includegraphics[width=\linewidth]{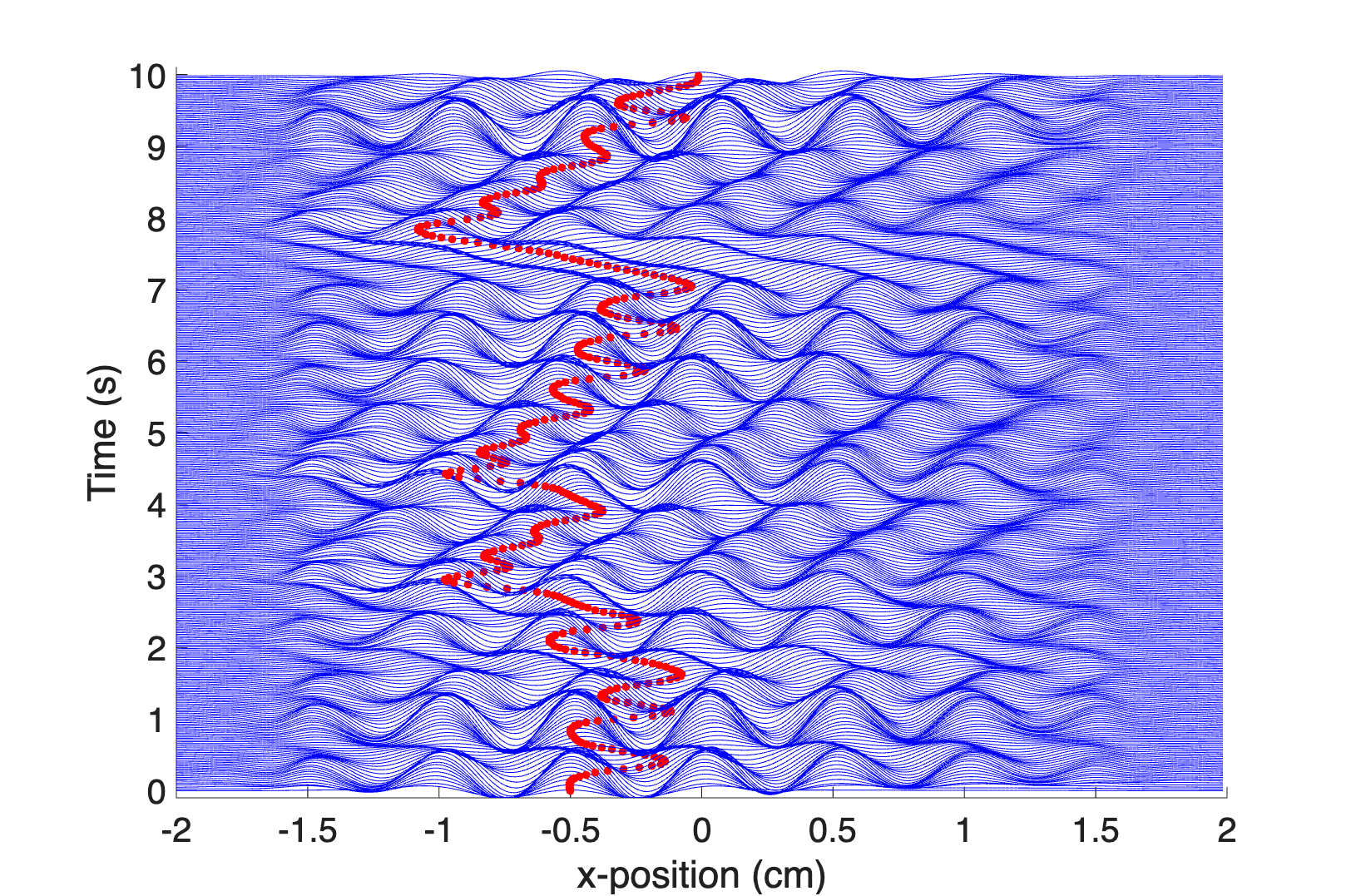}
        \caption{}
        \label{fig:chaostrajectory}
    \end{subfigure}
    \begin{subfigure}{0.33\textwidth}
        \centering\includegraphics[width=\linewidth]{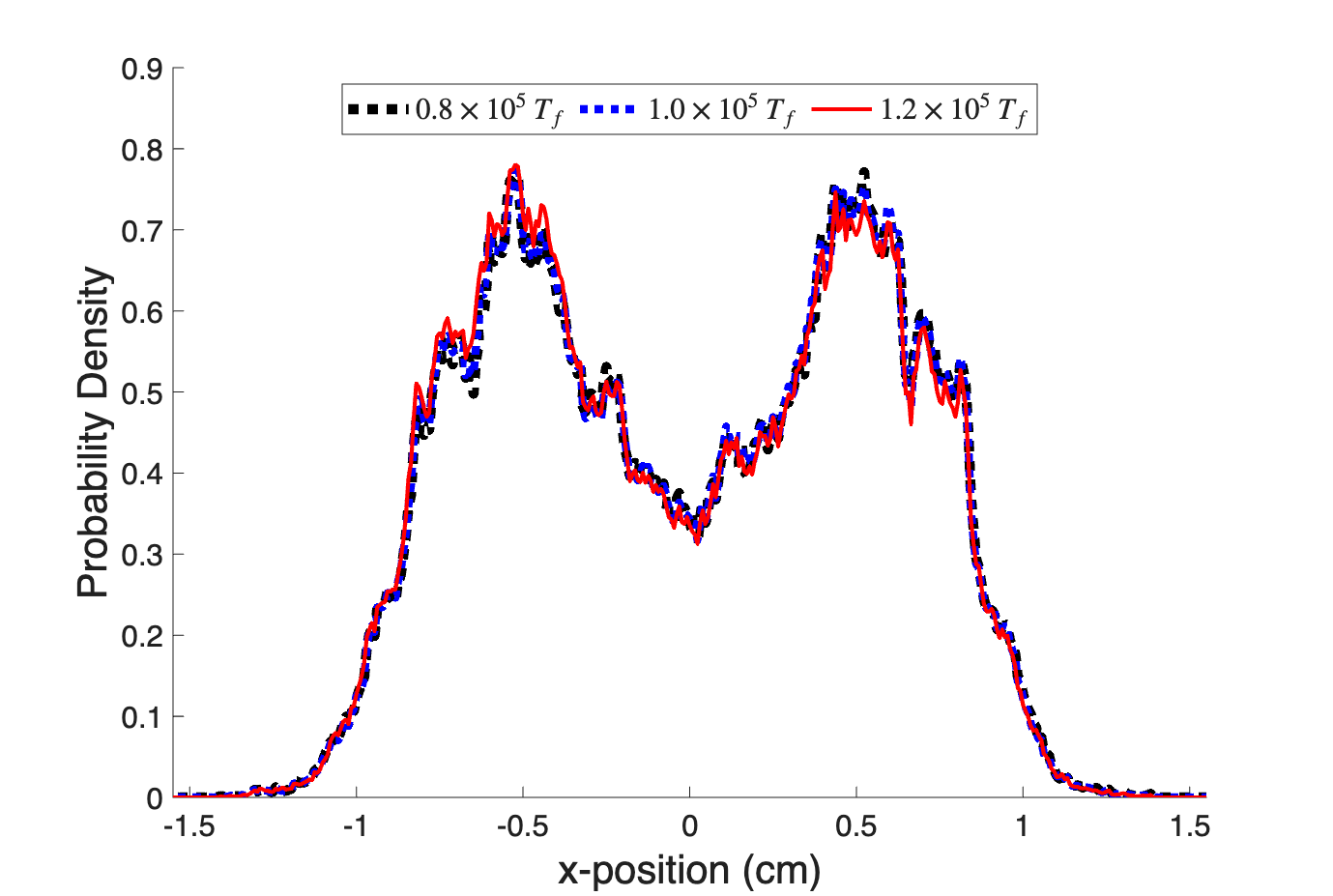}
        \caption{}\label{fig:chaospdf}
    \end{subfigure}%
    \begin{subfigure}{0.33\textwidth}
        \centering\includegraphics[width=\linewidth]{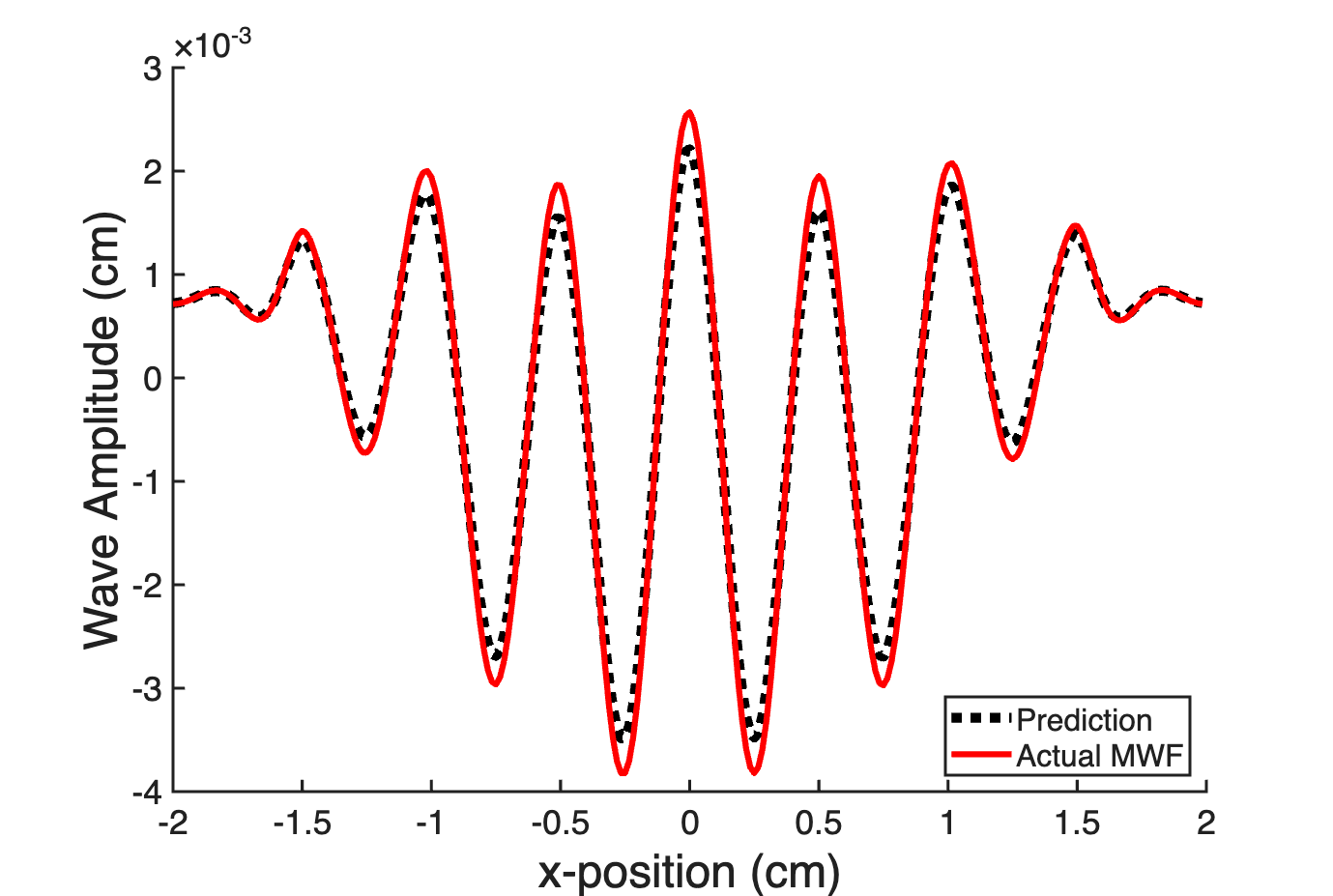}
        \caption{}\label{fig:chaosmwf}
    \end{subfigure}
    \caption{\dave{\textbf{(a)} A droplet (red) starts at $-0.5~\mathrm{cm}$ and traverses the $3~\mathrm{cm}$ domain with memory parameter $\Gamma/\Gamma_F=0.99$. The trajectory of the droplet is chaotic, in contrast to the numerical experiments of Figures~\ref{fig:exdrop} and~\ref{fig:mwfcomp}. \textbf{(b)} The PDF of the droplet's position at times $0.8\times 10^5\,T_F$, $1.0\times 10^5\,T_F$, and $1.2\times 10^5\,T_F$, demonstrating statistical convergence.  \textbf{(c)} The measured MWF at time $1.2\times 10^5\,T_F$ aligns closely with the predicted MWF of Claim~\ref{claim:main_heuristic}.}}
    \label{fig:chaos}
\end{figure*}


We first run simulations on two rectangular domains (width $3~\cm$ and $4~\cm$) with three choices of the memory parameter $\Gamma/\Gamma_F = 0.75$, $0.80$, and $0.85$. In all cases, the depth is $5~\cm$ within the cavity and $0.01~\cm$ outside, and the boundary extends $0.5~\cm$ past the edges of the cavity. The wave field is discretized with 256 horizontal nodes for the $3~\cm$ domain and 384 nodes for the $4~\cm$ domain. We run the simulation with a timestep of $\Delta t=1.25\times 10^{-5}~\s=5\times 10^{-4}\,T_F$. \dave{The droplet starts each experiment at $x_p=-0.5~\cm$ with initial velocity $-0.05~\cm/\mathrm{s}$, to break the left-right symmetry of the system.}

In order to illustrate Claim~\ref{claim:main_heuristic}, we run a long-time simulation ($500~\s$, or $2\times 10^4\,T_F$) of the walking droplet model, recording the instantaneous wave fields and droplet position \dave{once per period, at the start of each bounce} (Figure~\ref{fig:snapshot}). The droplet generates a time-varying wave field as it moves about the cavity (Figure~\ref{fig:trajectory}), and we aggregate these to generate a MWF (Figure~\ref{fig:exdropC}). Separately, we numerically compute the mean wave field $\eta_B(x,y)$ of a stationary bouncer at each point $y$ in the domain, and we integrate the kernel $\eta_B$ against the probability density $\rho(y)$ of our simulation. This calculation yields a prediction for the MWF in accordance with Claim~\ref{claim:main_heuristic}. \dave{In Figure~\ref{fig:mwfcomp}, we compare these predictions against the true MWFs, noting a close match in all cases. The error is higher at higher values of $\Gamma/\Gamma_F$; as $\Gamma$ increases, past values influence the current wave field more strongly, and the MWF converges slower}.

\dave{We note that the droplet dynamics are periodic in the above experiments, with trajectories similar to that of Figure~\ref{fig:trajectory}. In Figure~\ref{fig:chaos}, we run the same numerical experiment with a $3~\cm$ domain and a higher memory parameter $\Gamma/\Gamma_F=0.99$, which now yields \emph{chaotic} droplet dynamics (Figure~\ref{fig:chaostrajectory}). In this system, the statistical convergence of the droplet is slower (Figure~\ref{fig:chaospdf}), so we allow the simulation to run for $3000~\s$ (or $1.2\times 10^5\,T_F$) before stopping. The alignment between the measured and predicted MWFs remains robust (Figure~\ref{fig:chaosmwf}).}

\dave{In Figure~\ref{fig:2drop}, we investigate a two-droplet simulation similar to that of \citet{PhysRevFluids.7.093604}}. Here, we consider the joint PDF $\mu:\mathbb R^2\rightarrow [0,\infty)$ of two droplets in our one-dimensional pilot-wave system. We conduct \dave{simulations} in two different variants of a two-cavity domain, which we refer to as the `uncorrelated' and `correlated' systems, respectively. In both, the left cavity is $1~\cm$ wide and the right cavity is $1.1~\cm$ wide, and both are $0.5~\cm$ deep; breaking the left-right symmetry of the \dave{topography} ensures that the dynamics are not generically symmetric. In the uncorrelated system, the two droplets are separated by a shallow region (of depth $0.01~\cm$), such that their wave fields interact minimally. In the correlated case, the central barrier is replaced by a deep cavity, allowing each droplet to affect the other through their shared wave field. 

We run each simulation for $2\times 10^4\,T_F$, and we report the resulting PDFs, \dave{measured} MWFs, and predicted MWFs in both cases in Figure~\ref{fig:2drop}. As expected, the `uncorrelated' system gives rise to approximately uncorrelated droplet statistics (Figure~\ref{fig:nocorrpdf}), and the resulting MWF is \dave{very closely approximated by the sum} of the two MWFs corresponding to each individual droplet (Figure~\ref{fig:nocorrmwf}). The `correlated' system \dave{yields} highly correlated droplet statistics (Figure~\ref{fig:corrpdf}) and a qualitatively different form of the MWF (Figure~\ref{fig:corrmwf}). The predicted PDF-MWF relationship \dave{approximately holds in both cases}. 

\begin{figure*}
    \begin{subfigure}[b]{0.33\textwidth}
        \includegraphics[width=\linewidth]{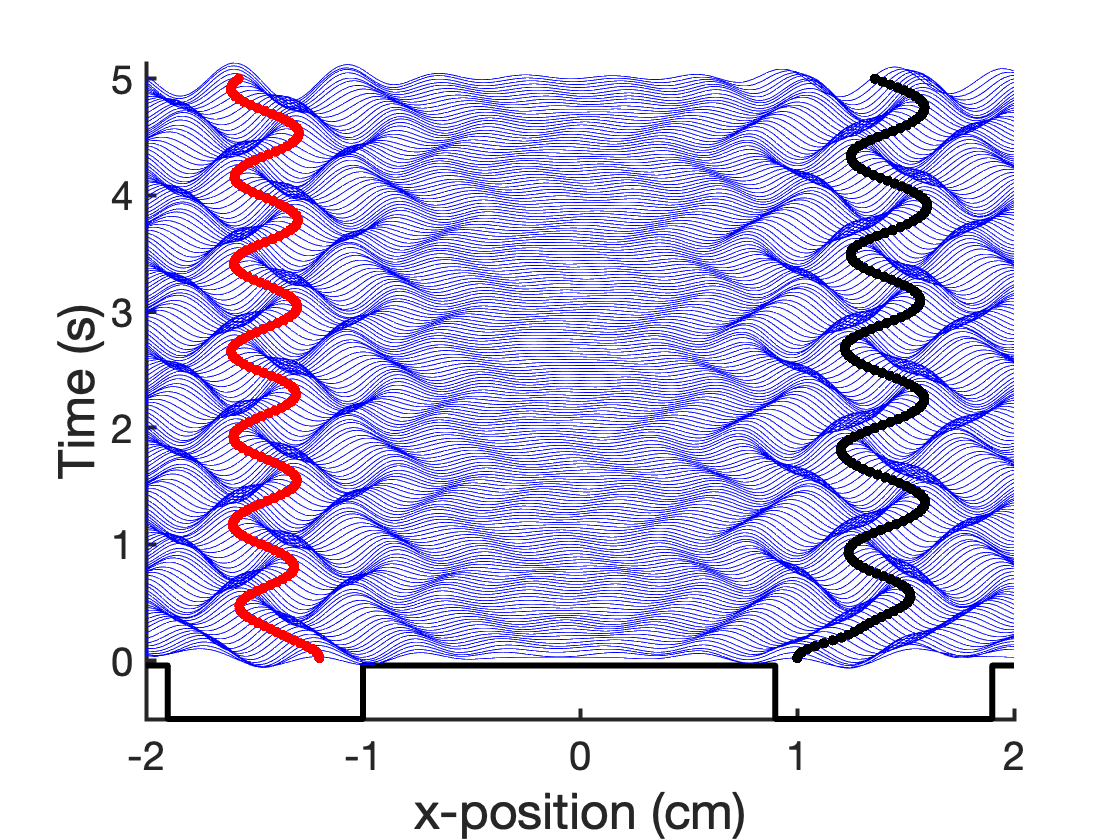}
        \caption{Uncorrelated Trajectories}\label{}
    \end{subfigure}%
    \begin{subfigure}[b]{0.33\textwidth}
        \includegraphics[width=\linewidth]{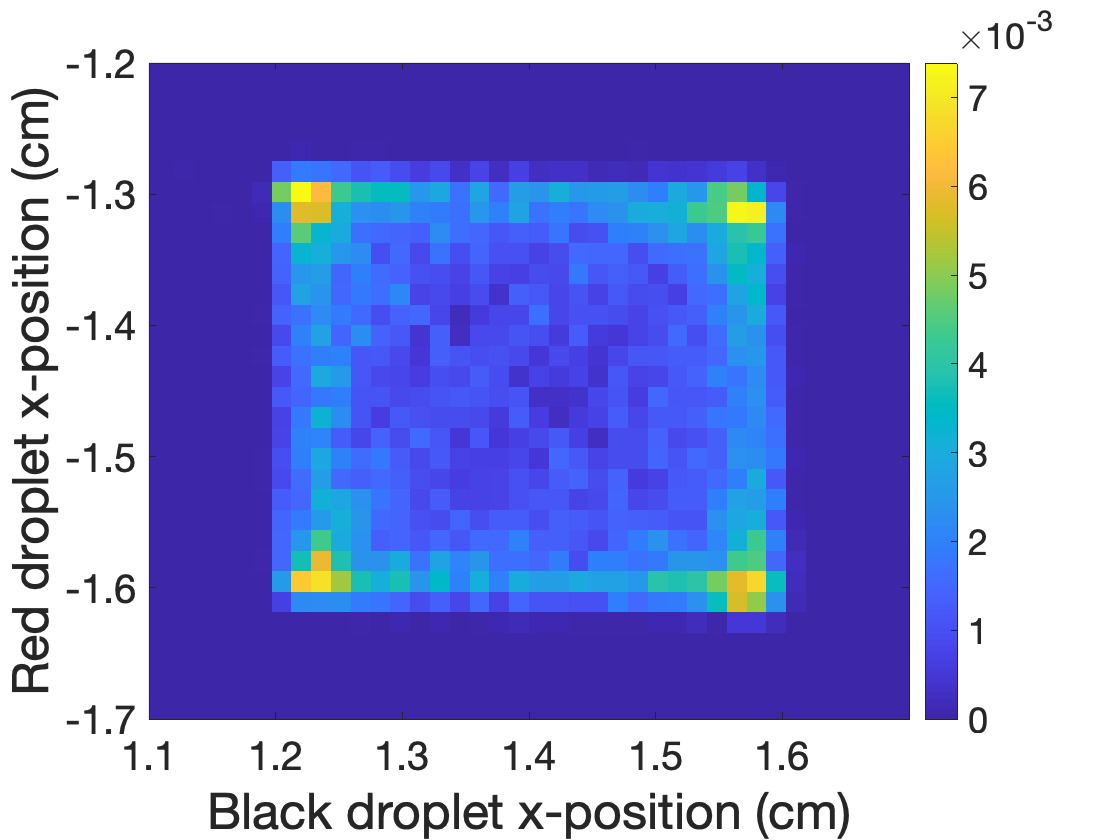}
        \caption{Uncorrelated PDF}\label{fig:nocorrpdf}
    \end{subfigure}%
    \begin{subfigure}[b]{0.33\textwidth}
        \includegraphics[width=\linewidth]{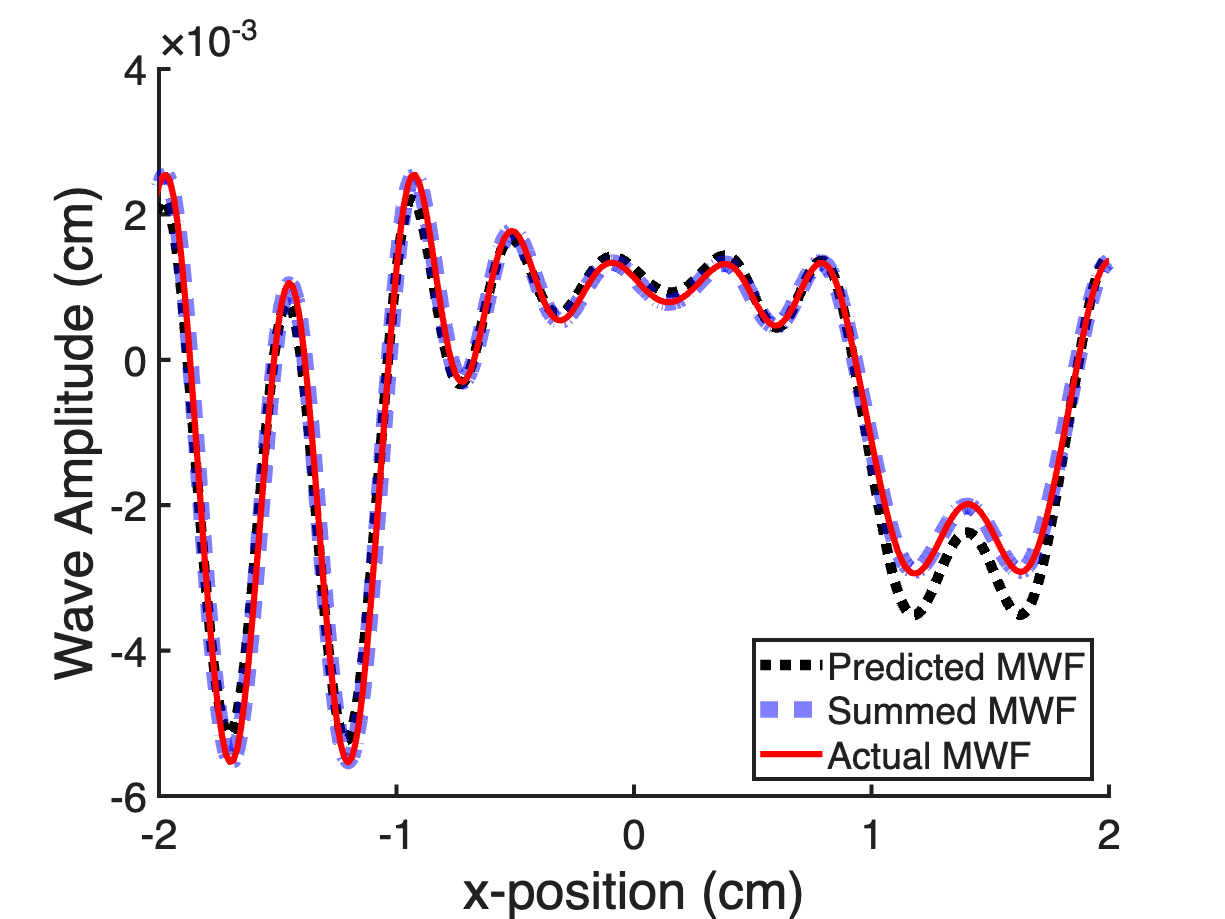}
        \caption{Uncorrelated MWF}\label{fig:nocorrmwf}
    \end{subfigure}
    \centering
    \begin{subfigure}[b]{0.33\textwidth}
        \includegraphics[width=\linewidth]{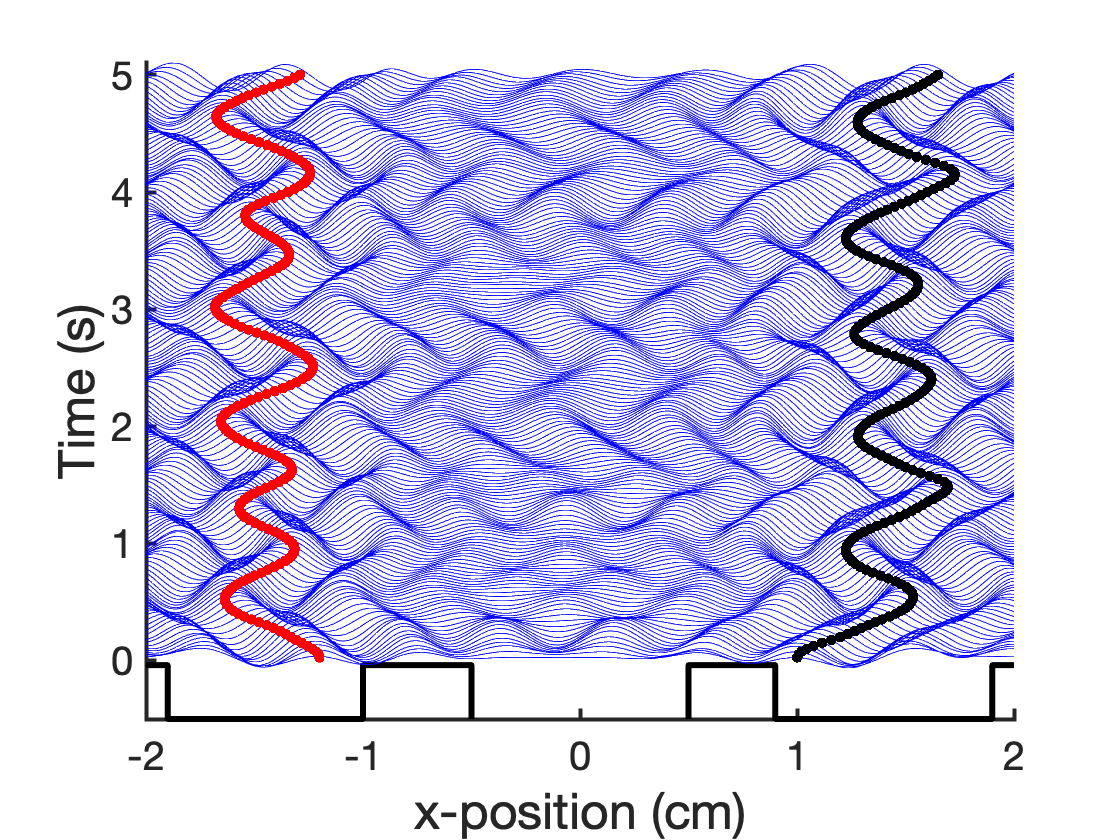}
        \caption{Correlated Trajectories}\label{}
    \end{subfigure}%
    \begin{subfigure}[b]{0.33\textwidth}
        \includegraphics[width=\linewidth]{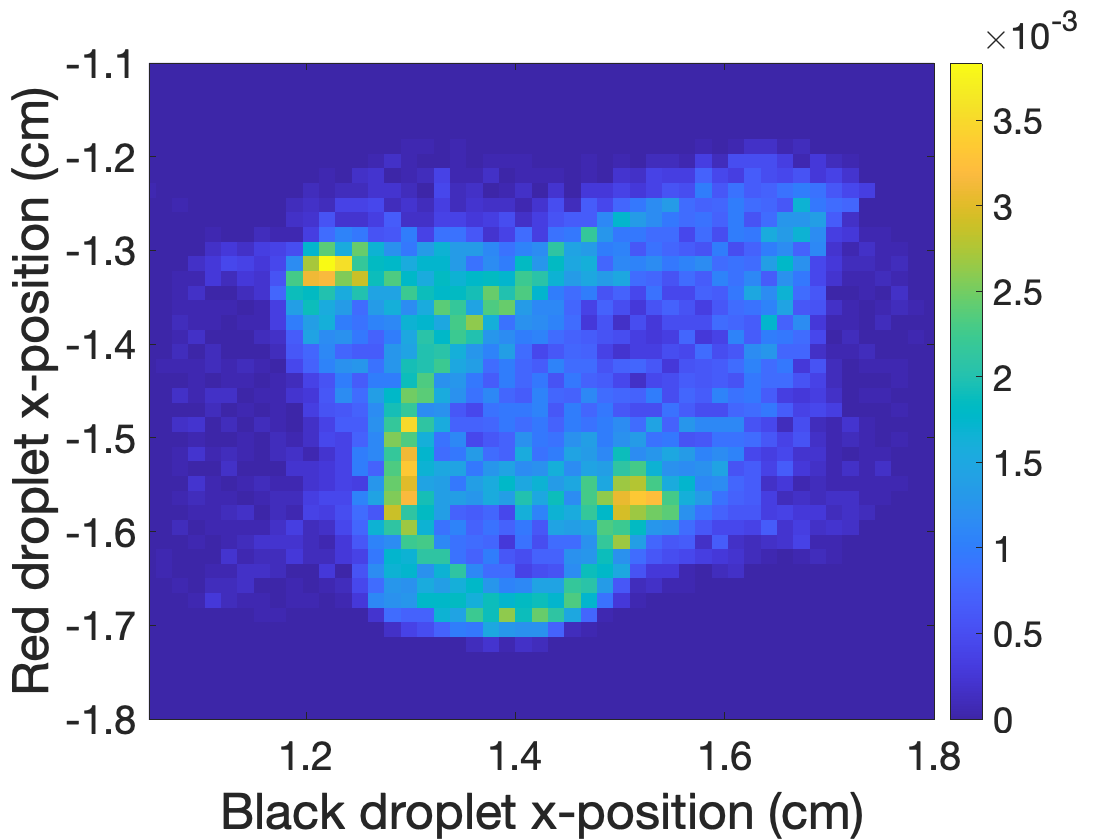}
        \caption{Correlated PDF}\label{fig:corrpdf}
    \end{subfigure}%
    \begin{subfigure}[b]{0.33\textwidth}
        \includegraphics[width=\linewidth]{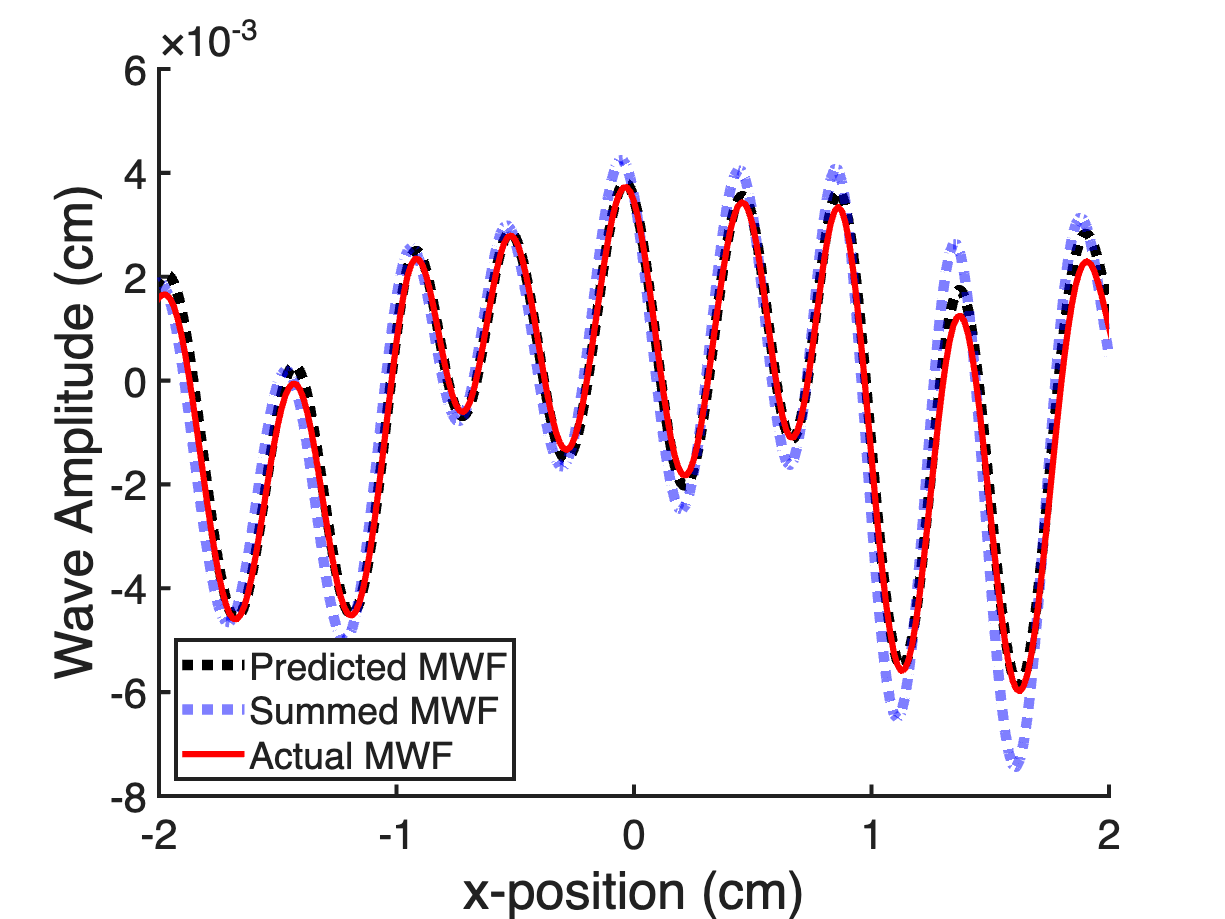}
        \caption{Correlated MWF}\label{fig:corrmwf}
    \end{subfigure}
    \centering
    
    \caption{Two droplets move in each of the asymmetrical two-cavity systems, which we label `uncorrelated' and `correlated', respectively. The uncorrelated domain consists of two cavities separated by a wide central barrier, preventing the droplets from communicating with one another; the correlated domain replaces the barrier with another cavity, such that the droplets interact through their shared wave field. One droplet starts at $-1.20~\cm$ (red) and another starts at $1.00~\cm$ (black). \textbf{(a,d)} The uncorrelated and correlated systems, respectively, during the first 200 Faraday periods. \textbf{(b,e)} PDFs of the two systems after $1.8\times 10^4$ Faraday periods. \textbf{(c,f)} \dave{The comparison between the measured MWFs, the predicted MWFs, and the sum of the two MWFs that arise when one of the two droplets is removed from the system.}}
    \label{fig:2drop}
\end{figure*}

\section{Applications of PDF-MWF Correspondence}\label{sec:apps}
In the present section, we highlight various applications of our results to current themes in walking droplet research, including interactions between multiple droplets, finely-resolved models of the droplet-bath interaction, and droplets bouncing out-of-phase with their guiding wave.

\subsection{Multi-Droplet Statistics}
Claim~\ref{claim:main_heuristic} encompasses systems with several droplets, even if they do not impact the bath simultaneously. \dave{For a simple example, suppose we have $N$ droplets, each of which impacts the bath once per vibration period, and suppose the impact locations\footnote{\dave{In practice, the `impact location' must be chosen by averaging the droplet's position over a small time interval. Our argument is agnostic to the specifics of this averaging process, as long as the resulting positions provide sufficient data to accurately characterize the droplet-bath interaction in a given period.}} $x_1,...,x_N\in\Omega$ are sufficient to characterize the droplet-bath interaction in a given period. If these positions are distributed} according to a stationary distribution $\rho = \rho(x_1,...,x_N)$, the mean wave field $\ovl{\eta}$ is given by
\[\ovl{\eta}(x) = \int_{\Omega^N}\eta_B(x,x_1,...,x_N)\rho(x_1,...,x_N)\,dx_1\cdots dx_N,\]
where $\eta_B(x,x_1,...,x_N)$ is the mean wave field corresponding to $N$ stationary bouncers at $(x_1,...,x_N)$. Investigations of multi-droplet interactions have uncovered a variety of unique behaviors, including orbiting and promenading bound states~\citep{10.1063/1.5032116,10.1063/1.5032114,Couchman_Turton_Bush_2019}, regular lattices~\citep{Eddi_2009,sym14081524} and ring structures~\citep{osti_10201541}, and long-distance correlations between droplets~\citep{PhysRevFluids.7.093604}.

\subsection{Arbitrary Droplet-Bath Impact Models}
In another direction, Claim~\ref{claim:main_heuristic} places few restrictions on how one models the droplet-bath interaction. Although phenomenological and reduced models of walking droplets have employed instantaneous droplet-bath impacts~\citep{fort_2010,Durey_Milewski_2017}---such as the \emph{stroboscopic} model of \citet{Oza_Rosales_Bush_2013}, which eliminates vertical dynamics and phase from consideration---more complete models account for the spring-like interaction between the droplet and the wave~\citep{molacek_bush_2013,milewski_galeano-rios_nachbin_bush_2015}. With these dynamics in place, such models are able to capture important nuances missed in earlier work, such as modulations of the vertical bouncing state~\citep{doi:10.1063/1.5032221}. Claim~\ref{claim:main_heuristic} can be applied directly to such models, \dave{and indeed, our numerical results in Section~\ref{sec:numerics} make use of a finite droplet-bath contact time}.

\subsection{Non-Resonant Walking Droplets}
Finally, \citet{Primkulov_2025} and \citet{Evans_2025} have recently undertaken a study of droplets bouncing \emph{out of resonance} with the underlying bath. They found that non-resonant bouncing helps to rationalize a number of subtle details of the droplet system, including the `backtracking' behavior reported by \citet{PhysRevLett.117.094502} and the intermittent mode switching seen in high memory systems. 

There are two ways to incorporate such effects into the present framework. First, if an instantaneous pressure model suffices, one can model a single droplet using its position $x$ and its \dave{instantaneous} impact phase $\theta$. If both are distributed according to a stationary distribution $\rho = \rho(x,\theta)$, then the mean wave field $\ovl{\eta}$ is given by
\[\ovl{\eta}(x) = \int_{\Omega\times S^1} \eta_B(x,x',\theta')\rho(x',\theta')\,dx'd\theta',\]
where $\eta_B(x,x',\theta')$ is the mean wave field of a bouncer with a consistent impact phase. If an instantaneous pressure model does \emph{not} suffice, we must account for the fact that a single period $[0,T]$ of wave oscillation could contain effects from \emph{two} droplet bounces: the end of the previous bounce (which started at $t<0$) and the start of a new bounce (at $t\in[0,T]$). We could thus model a droplet as carrying two impact phases\footnote{\dave{There are various conventions for defining the `impact phase' when the droplet-bath contact time is finite; for instance, \citet{molacek_bush_2013} use a mean impact phase, weighted by the bath's oscillation. So long as the prescription for $\theta$ is consistent (and the resulting impact phase is sufficient to approximately characterize the droplet-bath impact), our results hold as stated.}}, $\theta_1$ and $\theta_2$, and proceed as before. 

With either approach, we expect that Claim~\ref{claim:main_heuristic} should recover the mean wave field behavior seen in the work of \citet{Evans_2025}. In particular, in conditions where the droplet undergoes frequent mode switching---i.e., the impact phase $\theta$ alternates between $\pm\theta_0$---they observe that $\ovl{\eta}\to 0$ over time. In this case, the force on the droplet generated by the mean wave field vanishes, leaving only a `ponderomotive' force generated by the root-mean-squared (RMS) value 
\[\langle\eta(x)^2\rangle^{1/2} = \left(\frac{1}{N}\sum_{n=1}^N \eta(x,nT_F)^2\right)^{1/2}\]
of the wave field. We believe that the RMS wave field can be studied in a general system using techniques \dave{similar to} those presented here, but further investigation of the RMS wave field is outside the present scope.

\section{Conclusion}
In this work, we have rigorously established a far-reaching correspondence between the statistics of a walking droplet and the time-average of its guiding wave (Claim~\ref{claim:main_heuristic}). Our work extends the results of \citet{durey_chaos}, \citet{Durey_Milewski_Wang_2020}, and \citet{Abraham_2024} in the case of single-droplet systems with instantaneous droplet-bath contact time, and it corroborates the numerical and experimental findings of \citet{Saenz2018}, \citet{kutz_pilot-wave_2023}, and \citet{Evans_2025}. \dave{It also greatly strengthens the analogy between the statistics of walking droplets and those of quantum particles. In particular, Born's rule~\eqref{eq:born} applies equally well when a quantum state $|\vec{\alpha}\rangle$ in a (Hilbert) state space $\mathcal{H}$ is parameterized by multiple values $\vec{\alpha}=\{\alpha_1,...,\alpha_N\}$, which may include particle positions, momenta, spins, or energy values; in this case, Born's rule reads~\cite{griffiths2017introduction}
\[\rho(\vec{\alpha}) = \big|\langle \vec{\alpha}| \psi\rangle\big|^2,\]
writing $|\psi\rangle\in\mathcal{H}$ for the particle's state and $\langle\,\cdot\,|\,\cdot\,\rangle$ for the inner product on $\mathcal{H}$. As one consequence of our work, we see that a similar principle applies to the correspondence between droplet statistics and the mean wave field.}

The present approach has promising applications to walking droplet statistics more broadly. In one direction, our numerical simulations in Section~\ref{sec:numerics} highlight how the mean wave field can be used to diagnose correlation between distinct droplets. In another direction, \citet{Evans_2025} have observed that the magnitude of the mean wave field indicates the relative importance of second-order, `ponderomotive' forces on the motion of droplets; using the results presented here, one should be able to estimate the importance of ponderomotive effects \emph{a priori}. Going further, \dave{as discussed in Section~\ref{sec:apps}}, we believe our analysis can be adapted to show how these ponderomotive effects---which depend on second moments of the wave field, rather than simply its mean---might be recovered from droplet statistics alone.

Finally, we note that our analysis can be applied similarly to any time-periodic driven-dissipative PDEs, such as geophysical fluid flows under tidal forcing or ecological systems experiencing daily or seasonal variations. For such systems, Claim~\ref{claim:main_heuristic} should allow one to invoke a similar `PDF-MWF correspondence' to study the long-time response to ergodic or periodic forcing.

\begin{acknowledgments}
We would like to thank the organizers of MIT PRIMES for connecting the two authors, and for their constructive suggestions on the research and manuscript. We would also like to thank John Bush for helpful discussions on pilot-wave hydrodynamics and Glenn Flierl for his guidance in preparing the present manuscript. The second author would like to acknowledge the support of an NDSEG Graduate Fellowship.
\end{acknowledgments}

\section*{Data Availability Statement}
The data that support the findings of this study are available from the corresponding author upon request.

\section*{Conflict of Interest Statement}
The authors have no conflicts to disclose.

\appendix
\section{Statement and Proof of Claim~\ref{claim:main_heuristic}}\label{sec:main}
We prove Claim~\ref{claim:main_heuristic} in the present section. In order to clarify our method of approach, we first consider the case of a time-periodic, finite-dimensional ODE, \dave{which can be handled using classical Floquet theory. Physically, the finite-dimensional case can be taken to represent a fixed numerical discretization of our infinite-dimensional wave equation. For more detail on Floquet theory and its applications to PDEs, we recommend the work of \citet{Kuchment1993}.} 

\subsection{Classical Floquet Theory for the Finite-Dimensional Case}\label{sec:finitedim}
As a starting point, for a fixed period $T>0$, consider the differential equation
\begin{equation} 
\dot{y}=L(t)y+b(t),\label{eq:floquetdrop}
\end{equation}
where $L(t)\in\CC^{d\times d}$ and $b(t)\in\CC^d$ are continuous and $T$-periodic. The following result shows how to apply Floquet's theory to an inhomogeneous equation, and it establishes the pattern that we will later replicate in the PDE setting:

\begin{lemma}\label{lem:floquetsol}
    Suppose $L(t)$ and $b(t)$ are $T$-periodic and continuous. By Floquet's theorem \citep{floquet}, the solutions to $\dot{y}=L(t)y$ can be organized into a fundamental matrix $Y(t)\in\CC^{d\times d}$ that takes the form $Y(t) = Q(t)e^{Bt}$, where $B\in\CC^{d\times d}$ is constant and $Q(t)\in\CC^{d\times d}$ is $T$-periodic, continuously differentiable, and invertible for all time.

    With this notation in place, any solution $y(t)$ to the inhomogeneous equation~\eqref{eq:floquetdrop} satisfies
    \begin{equation}\label{eq:difference_eq}
        y((n+1)T+t_0)=C(t_0)y(nT+t_0)+R(t_0),
    \end{equation}
    for any non-negative integer $n$ and any time offset $t_0\in[0,T]$; here, the quantities $C(t_0)=Q(t_0)e^{BT}Q^{-1}(t_0)$ and
    \begin{equation}
        R(t_0)=e^{B(t_0+T)}\int_{t_0}^{t_0+T}e^{-Bs}Q^{-1}(s)b(s)~ds\label{eq:r0}
    \end{equation}
    are dependent only on $t_0$.
\end{lemma}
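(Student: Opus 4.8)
The plan is to use the Floquet structure to solve the inhomogeneous equation~\eqref{eq:floquetdrop} explicitly via variation of parameters, and then specialize the resulting integral formula to the two time points $nT+t_0$ and $(n+1)T+t_0$, exploiting the $T$-periodicity of $Q(t)$ and $b(t)$ to collapse everything into the claimed recursion. First I would recall that, given the fundamental matrix $Y(t)=Q(t)e^{Bt}$ for the homogeneous problem, the general solution of the inhomogeneous equation can be written by variation of parameters as
\begin{equation}
    y(t)=Y(t)Y^{-1}(t_\ast)\,y(t_\ast)+Y(t)\int_{t_\ast}^{t}Y^{-1}(s)\,b(s)\,ds,
\end{equation}
for any base time $t_\ast$. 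The invertibility of $Q(t)$ for all $t$, guaranteed by Floquet's theorem, is exactly what makes $Y(t)^{-1}=e^{-Bt}Q^{-1}(t)$ well-defined everywhere, so this formula is valid on the whole period.

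Next I would evaluate this formula at $t=(n+1)T+t_0$ with base point $t_\ast=nT+t_0$, so that $y(nT+t_0)$ appears on the right-hand side as desired. The homogeneous part becomes $Y((n+1)T+t_0)\,Y^{-1}(nT+t_0)$; writing $Y(t)=Q(t)e^{Bt}$ and using the $T$-periodicity $Q((n+1)T+t_0)=Q(nT+t_0)=Q(t_0)$, the factors $e^{B(n+1)T+Bt_0}$ and $e^{-BnT-Bt_0}$ combine so that the $n$-dependence cancels, leaving precisely $C(t_0)=Q(t_0)e^{BT}Q^{-1}(t_0)$. For the inhomogeneous term I would substitute $\tau=s-nT$ in the integral $\int_{nT+t_0}^{(n+1)T+t_0}Y^{-1}(s)b(s)\,ds$; the $T$-periodicity of both $Q$ and $b$ gives $Q^{-1}(s)=Q^{-1}(\tau)$ and $b(s)=b(\tau)$, while the exponential picks up a clean factor $e^{-BnT}$ from $e^{-Bs}=e^{-B\tau}e^{-BnT}$. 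This factor is then absorbed by the prefactor $Y((n+1)T+t_0)=Q(t_0)e^{B((n+1)T+t_0)}$, and after recombining one is left with $R(t_0)=e^{B(t_0+T)}\int_{t_0}^{t_0+T}e^{-Bs}Q^{-1}(s)b(s)\,ds$, independent of $n$.

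The only real subtlety — and the step I expect to require the most care — is bookkeeping the periodicity shifts so that every explicit appearance of $n$ genuinely cancels, both in the homogeneous and the inhomogeneous pieces. In particular one must be careful that $e^{Bt}$ is \emph{not} periodic, so the cancellation of the $e^{BnT}$ factors relies entirely on pairing each exponential arising from $Y$ at the later time against the one from $Y^{-1}$ (or from the integration substitution) at the earlier time; the periodicity is used only on $Q$ and $b$, never on the exponential. Once the change of variables is performed consistently and these exponential factors are tracked, both $C(t_0)$ and $R(t_0)$ emerge in the stated form and depend on $t_0$ alone, completing the proof.
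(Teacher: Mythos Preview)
Your argument is correct and is essentially the same as the paper's: the paper first substitutes $z(t)=Q^{-1}(t)y(t)$ to reduce~\eqref{eq:floquetdrop} to the constant-coefficient equation $\dot z=Bz+Q^{-1}b$ and then applies Duhamel's formula to $z$, whereas you apply variation of parameters directly with the fundamental matrix $Y(t)=Q(t)e^{Bt}$; these two computations are equivalent line-for-line. One small bookkeeping point: if you carry your own prefactor $Y((n+1)T+t_0)=Q(t_0)e^{B((n+1)T+t_0)}$ all the way through, the inhomogeneous term comes out as $Q(t_0)\,e^{B(t_0+T)}\int_{t_0}^{t_0+T}e^{-Bs}Q^{-1}(s)b(s)\,ds$, i.e., with an additional factor of $Q(t_0)$ in front---this does not affect the lemma's conclusion, since all that matters downstream is that $R$ depends only on $t_0$.
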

\begin{proof}
Without loss of generality, suppose $Q(0)=1$. Substituting the expression $Y(t)=Q(t)e^{Bt}$ into $\dot{Y}(t) = L(t)Y(t)$ and removing common factors, we have
\begin{equation}
\dot{Q}+QB=LQ. \label{eq:floquetsub}
\end{equation}
Separately, let $z(t)=Q^{-1}(t)y(t)$ for a given solution $y(t)$ to~\eqref{eq:floquetdrop}. We have
\begin{equation}
\dot{Q}z+Q\dot{z}=LQz+b.\label{eq:qzsub}
\end{equation}
After right multiplying~\eqref{eq:floquetsub} by $z$, we combine~\eqref{eq:floquetsub} and~\eqref{eq:qzsub} to obtain
\begin{equation}
\dot{z}=Bz+Q^{-1}b,
\end{equation}
which is solved by
\[z(t)=e^{Bt}z_0+e^{Bt}\int_0^t e^{-Bs}Q^{-1}(s)b(s)~ds,\]
for a given $z_0=z(0) = y(0)$. Setting $t = nT+t_0$, we find
\begin{align*}
z(t+T)&=e^{B(t+T)}z_0+e^{B(t+T)}\int_0^{t+T}e^{-Bs}Q^{-1}(s)b(s)~ds\\
&= e^{B(t+T)}z_0+e^{B(t+T)}(e^{-Bt}z(t)-z_0)\\
&\qquad+e^{B(t+T)}\int_t^{t+T}e^{-Bs}Q^{-1}(s)b(s)~ds\\
&=e^{BT}z(t)+e^{B(t+T)}\int_{t}^{t+T}e^{-Bs}Q^{-1}(s)b(s)~ds\\
&=e^{BT}z(t)+e^{B(t_0+T)}\int_{t_0}^{t_0+T}e^{-Bs}Q^{-1}(s)b(s)~ds,
\end{align*}
applying the $T$-periodicity of $Q$ and $b$. The lemma follows.
\end{proof}

To derive a version of Claim~\ref{claim:main_heuristic} in finite dimensions, we need to allow the forcing $b(t)$ of \eqref{eq:floquetdrop} to vary from one period to the next. We can then use a technique reminiscent of that of \citet{durey_chaos} to characterize the mean dynamics of such a system:



\begin{theorem}\label{thm:finitedim}
Suppose $X(t)\in\CC^m$ is piecewise constant on each period $[k,k+T)$, for $k\in \{0,T,2T,...\}$, and undergoes an ergodic process with stationary probability density $\mu$ on $\CC^m$. Suppose that $y(t)\in\CC^{d}$ solves 
\begin{equation}\label{floquetdropx}
\dot{y}=L(t)y+b(t,X(t)),
\end{equation}
where $L(t)\in\CC^{d\times d}$ is $T$-periodic and continuous with monodromy matrix $\|e^{BT}\|<1$, and $b(t,x)$ is $T$-periodic and continuous in $t$ for each fixed value $x\in\CC^m$. Further suppose that $x\mapsto \int_0^T|b(t,x)|\,dt$ is uniformly bounded for $x\in\CC^m$. Then, the mean field $\langle y\rangle \doteq\lim_{N\rightarrow\infty}N^{-1}\sum_{n=1}^N y(nT)$ is given by
\[\langle y\rangle = \int_{\RR}\mu(X')y_B(X')dX',\]
where $y_B(X')\in\CC^n$ is the mean field when $X(t) = X'$ is held constant.\label{thm:main}
\end{theorem}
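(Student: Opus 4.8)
The plan is to collapse the continuous dynamics onto the once-per-period samples $y_n\doteq y(nT)$ and reduce everything to an affine recursion, then marry a geometric-memory sum to the ergodic theorem. First I would apply Lemma~\ref{lem:floquetsol} one period at a time. Although the forcing $b(t,X(t))$ changes from period to period, on each interval $[nT,(n+1)T)$ we have $X(t)\equiv X_n$ constant, so the forcing $b(\cdot,X_n)$ there is genuinely $T$-periodic in $t$; a variation-of-parameters computation identical to the one in the lemma (with $t_0=0$ and $Q(0)=I$) therefore gives the one-step map $y_{n+1}=C\,y_n+R(X_n)$, where $C\doteq e^{BT}$ and $R(x)\doteq e^{BT}\int_0^T e^{-Bs}Q^{-1}(s)\,b(s,x)\,ds$. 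The hypothesis $\norm{e^{BT}}<1$ makes $C$ a strict contraction, so $I-C$ is invertible. For the constant configuration $X(t)\equiv X'$ the recursion is autonomous with unique fixed point $(I-C)^{-1}R(X')$, to which $y_n$ converges; since the Cesàro mean of a convergent sequence is its limit, the held-constant mean field is precisely $y_B(X')=(I-C)^{-1}R(X')$.

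Second, I would unroll the recursion to $y_n=C^n y_0+\sum_{k=0}^{n-1}C^{\,n-1-k}R(X_k)$ and average over $n$. The transient $C^n y_0$ decays geometrically, so its Cesàro mean vanishes. Re-indexing $j=n-1-k$ and interchanging the two finite sums, the running average rearranges to
\[ \frac1N\sum_{n=1}^N y_n=\sum_{j=0}^{N-1}C^{\,j}\,\frac{N-j}{N}\,A_{N-j}+o(1), \]
where $A_M\doteq \frac1M\sum_{k=0}^{M-1}R(X_k)$ is exactly the ergodic time-average of the sequence $R(X_k)$. By Birkhoff's ergodic theorem, $A_M\to\bar R\doteq\int R(X')\mu(X')\,dX'$ almost surely, and for each fixed $j$ we have $\tfrac{N-j}{N}\to1$ and $A_{N-j}\to\bar R$ as $N\to\infty$.

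The main obstacle is to justify passing $N\to\infty$ inside the $j$-sum, i.e. interchanging the geometric memory kernel with the ergodic (Cesàro) limit. Here I would use the uniform bound on $x\mapsto\int_0^T|b(t,x)|\,dt$: since $e^{-Bs}Q^{-1}(s)$ is continuous, hence bounded, on the compact interval $[0,T]$, the map $R$ is uniformly bounded, so $\sup_M\norm{A_M}<\infty$ and $\norm{C^{\,j}\tfrac{N-j}{N}A_{N-j}}\le\norm{C}^{\,j}\sup_M\norm{A_M}$. Because $\norm{C}<1$, this is a summable dominating sequence independent of $N$, so Tannery's theorem (discrete dominated convergence) permits the termwise limit.

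Passing the limit then yields $\langle y\rangle=\sum_{j=0}^\infty C^{\,j}\bar R=(I-C)^{-1}\bar R$. Pulling the constant matrix $(I-C)^{-1}$ through the integral defining $\bar R$ and recognizing $(I-C)^{-1}R(X')=y_B(X')$ gives
\[ \langle y\rangle=(I-C)^{-1}\bar R=\int y_B(X')\,\mu(X')\,dX', \]
which is the claimed identity. The remaining points are bookkeeping: verifying that the transient Cesàro average genuinely vanishes, and noting that because Birkhoff's theorem delivers almost-sure convergence, the conclusion holds in the same sense.
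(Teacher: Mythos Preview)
Your argument is correct and shares its first half with the paper: both reduce to the one-step recursion $y_{n+1}=C\,y_n+R(X_n)$ with $C=e^{BT}$ via Lemma~\ref{lem:floquetsol}, and both identify $y_B(X')=(I-C)^{-1}R(X')$. The divergence is in how the Ces\`aro limit is extracted. The paper simply averages the recursion itself: since $y_n$ is bounded (from $\norm{C}<1$ and $R$ bounded), $N^{-1}\sum_{n=1}^N y_{n+1}$ and $N^{-1}\sum_{n=1}^N y_n$ differ by a term of size $O(N^{-1})$, so averaging both sides and applying the ergodic theorem to $N^{-1}\sum R(X_n)$ gives $(I-C)\langle y\rangle=\bar R$ in one line. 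You instead unroll the recursion, reorganize the double sum into $\sum_j C^j\tfrac{N-j}{N}A_{N-j}$, and invoke Tannery's theorem to pass the limit inside. Your route is longer but makes the existence of $\langle y\rangle$ fully explicit, whereas the paper's shortcut leaves the shift-invariance of the Ces\`aro mean (and the underlying boundedness of $y_n$) to the reader; in exchange, the paper's version avoids the resummation bookkeeping entirely.
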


\begin{remark} In the walking droplet system, the term $y_B(X')\in\CC^d$ represents the wave field generated by a fixed bouncer at $X'\in\CC^m$; note that this term can be computed numerically for a given droplet model. The hypothesis that $\|e^{BT}\|<1$ corresponds to a subcritical forcing frequency $\Gamma<\Gamma_F$ in the droplet system, and the (relatively mild) integral hypothesis on $b$ simply guarantees that $R=R(t_0,x)$ is well-defined and uniformly bounded in $x$. 
\end{remark}
\begin{proof}
As before, write the fundamental matrix $Y$ of the homogeneous equation in Floquet form as $Y(t)=Q(t)e^{Bt}$. Without loss of generality, suppose $t\in\{0,T,2T,...\}$. Viewing the restriction $y|_{[t,t+T]}$ as a solution to~\eqref{floquetdropx} with periodic forcing, Lemma \ref{lem:floquetsol} implies
\begin{equation}
    y(t+T)=e^{BT}y(t)+R(X(t)),\label{eq:ziter}
\end{equation}
noting that the dependence of $b$ on $X$ confers the same to $R$. 
We can now take the mean value of both sides, noting (from our integrability assumptions on $b$) that $R$ is uniformly bounded in $X$; subsequently using the ergodic theorem~\citep{ergodictheorem} to simplify, we have
\[(1-e^{BT})\langle y\rangle=\int_\RR R(X')\mu(X')~dX'.\]
By hypothesis, $1-e^{BT}$ is invertible, and thus
\[\langle y\rangle=\int_\RR (1-e^{BT})^{-1}R(X')\mu(X')~dX'.\]
The stationary case corresponds to a Dirac measure $\mu(X)=\delta(x-X)$. Thus, $y_B(x,X)=(1-e^{BT})^{-1}R(X)$, proving the theorem. 
\end{proof}

The results of the present section can be applied to ordinary differential equations of the form \eqref{eq:floquetdrop} or \eqref{floquetdropx}. Moreover, we can also apply Theorem~\ref{thm:finitedim} without modification to linear, self-adjoint PDEs, by applying it separately to each eigenmode. The latter technique was applied by \citet{durey_chaos} to study walking droplets in uniform, unbounded domains\footnote{In fact, Durey, Milewski, and Bush studied a case where the operator decomposes into 2-D subspaces, but the application of Floquet theory is similar.}; \dave{their proof for general geometries uses more abstract techniques, like those of the following section}. 


\subsection{Statistical Correspondence in the PDE Case}\label{sec:infinitedim}
Theorem~\ref{thm:main} applies only to finite-dimensional ODEs (as well as self-adjoint PDEs, as discussed above); for instance, it can be used to prove a PDF-MWF relationship for a discretized droplet system, but not for the continuum limit. In order to proceed, we attempt to lift the argument of the preceding section to a slightly more abstract setting; one must interpret our PDE solution as an infinite-dimensional vector in a \emph{Banach space} $\cB$, and the wave operator $L$ as an unbounded operator on the same space. We introduce this formalism incrementally over the present section; despite these complications, the logic proceeds in much the same way as outlined above, and the results can be used with minimal changes.

First, a Banach space is defined as follows:

\medskip
\begin{definition}
    A \emph{Banach space} $\cB$ is a real or complex vector space, equipped with a norm $\|\cdot\|$ and with the property of `completeness'. That is, if $u_k\in\cB$ is a sequence for which $\|u_k-u_\ell\|\to 0$ as $k,\ell\to\infty$, then the sequence converges to a limit $u\in\cB$.\label{def:banach}
\end{definition}
\begin{example} Vector spaces of physical interest typically fall into this class:
\begin{enumerate}
    \item If $\Omega\subset\RR^d$ is a fixed domain, the space $L^\infty(\Omega;\RR^m)$ of bounded vector fields on $\Omega$ is a Banach space. For any bounded vector field $f:\Omega\to\RR^m$, we define the norm $\|f\|_{\infty} \doteq \sup_{x\in\Omega}|f(x)|$.
    \item In the same setting, the space $W^{\infty,k}(\Omega;\RR^m)$ of vector fields with bounded $k^{th}$ derivatives is a Banach space, with norm $\|f\|_{\infty,k} = \sum_{\ell\leq k}\|\nabla^{\otimes \ell} f\|_\infty$.
    \item Hilbert spaces---used widely in quantum mechanics---are special cases of Banach spaces. If $\langle\cdot,\cdot\rangle$ is the inner product on a Hilbert space $H$, the norm $\|\psi\|_H\doteq \sqrt{\langle\psi,\psi\rangle}$ furnishes $H$ with a Banach space structure.
\end{enumerate}
\end{example}
Notably, we can identify $L^\infty = W^{\infty,0}$, and the derivative $\nabla$ can be seen as a linear map from $W^{\infty,k}$ to $W^{\infty,k-1}$ for any $k$. To study a wave equation like ours~\eqref{eq:dureyeqs}, one is often interested in a slight generalization of the spaces $W^{\infty,k}$:
\begin{definition}
    If $1\leq p <\infty$, we define the Banach space $L^p(\Omega;\RR^m)$ of vector fields $f:\Omega\to\RR^m$ with 
    \[\|f\|_p \doteq \Big(\int_\Omega |f(x)|^p\,dx\Big)^{1/p}<\infty.\] 
    Likewise, we define the Banach space $W^{p,k}(\Omega;\RR^m)$ of vector fields with $k^\mathrm{th}$ derivatives in $L^p(\Omega;\RR^m)$, with norm $\|f\|_{p,k} = \sum_{\ell \leq k}\|\nabla^{\otimes \ell}f\|_p$.
\end{definition}

As the notation suggests, as $p\to\infty$, these norms converge to $\|\cdot\|_\infty$ and $\|\cdot\|_{\infty,k}$ for functions where the latter are defined. 

We can consider any linear, parabolic PDE (such as~\eqref{eq:dureyeqs}) as a differential equation on a Banach space $\cB$. We are interested in equations of the form
\begin{equation}\label{eq:banachODE}
    \dot{y}(t) = L(t)y + b(t,X(t)),
\end{equation}
where $L(t)$ and $b(t,X')$ are $T$-periodic for each $X'\in\RR^N$. In the walking droplet case, $y(t)\in\cB$ is our time-evolving wave field, $L(t)$ is our linear wave operator, $X(t)\in\RR^N$ is the configuration (position, velocity, phase, etc.) of the droplet on impact, and $b(t,X(t))\in\cB$ is the pressure applied by the droplet on $y$. In the droplet setting~\eqref{eq:dureyeqs}, we identify 
\[y = (\eta,\phi)\in W^{p,2}(\Omega;\RR^2),\]
for a fixed $1\leq p<\infty$, and thus restrict to the set of wave fields with well-defined second derivatives.

\begin{remark}
    An important note is, $L$ is \emph{not} generally well-defined as a map $\cB\to\cB$. For instance, if $\cB=W^{p,2}$ and $L$ involves two spatial derivatives, we can generically only say that $Ly\in L^p$ for any $y\in W^{p,2}$. This is a typical feature of differential equations; such an operator is said to be \emph{unbounded} on $\cB$. Otherwise, if $Ly\in\cB$ for all $y\in\cB$, the operator $L$ is said to be \emph{bounded}.
\end{remark}

Even with an unbounded operator, one often wants to keep working within the setting of $\cB$. To this end, we associate to any unbounded operator $L$ a subspace $\operatorname{dom}(L)\subset\cB$ such that, if $y\in\operatorname{dom}(L)$, we have $Ly\in\cB$; this subspace is known as the \emph{domain} of $L$. In the droplet setting, then, we fix
\[\cB = L^p(\Omega;\RR^2),\qquad \op{dom}(L) = W^{p,2}(\Omega;\RR^2)\subset\cB,\] 
so that $y\in\op{dom}(L)$. So long as $p<\infty$, the domain $\operatorname{dom}(L)$ has the property that it is `dense' in $\cB$. That is, for any $y\in\cB$, there is a sequence $y_k\in\operatorname{dom}(L)$ such that $y_k\to y$ as $k\to\infty$. In other words, any field in $L^p$ can be approximated to arbitrary accuracy by fields in $W^{p,2}$. This property is critical; it means that, even if $L$ cannot be applied everywhere in $\cB$, it can be applied \emph{nearly} everywhere.
    




Broadly speaking, the main difficulty in \dave{adapting Theorem~\ref{thm:finitedim} to} the infinite-dimensional case is that Floquet's theorem does not generally hold on a Banach space, preventing us from recovering an equivalent of Lemma~\ref{lem:floquetsol} directly. \dave{If our problem cannot be decomposed into ODEs as discussed in the preceding section}, one would like to proceed by trying to find a `Floquet form' appropriate \dave{for our PDE itself}. The first difficulty is, it is often non-trivial to show that a unique solution exists in the first place, even for (time-dependent) linear problems; we offer an easily-verified set of sufficient conditions in Proposition~\ref{prop:criteria} below, adapted from \citet{Pazy1963}. Even if a solution is known to exist, however, one must be careful in putting it into Floquet form. Recall that in the finite-dimensional case, the solution can be put in the form $y(t)\sim U(t)e^{Mt}y(0)$ for an invertible, periodic matrix $U$ (the `Floquet function') and a constant matrix $M$ (the `Floquet exponent'). To recover a periodic form of $U$, one must know that $e^{Mt}$ exists and is invertible for all $t$. This latter criterion can fail even for otherwise well-behaved PDEs~\citep{Kuchment1993}.

Fortunately, one does not need an exact Floquet decomposition to recover a PDF-MWF correspondence in the infinite-dimensional case. We proceed by leveraging the theory of \emph{evolution systems} \citep{Pazy1963} to characterize solutions to time-dependent parabolic PDEs:

\begin{definition}\label{def:evolutionsystem}
    Following \citet[Definition~5.3]{Pazy1963}, we define an \emph{evolution system} on a Banach space $\cB$ as a two-parameter family of bounded linear operators $U(t,s):\cB\to\cB$, for values $0\leq s\leq t\leq T$, satisfying
    \begin{enumerate}
        \item[$\mathbf{(E_1)}$] $U(s,s)=1$;
        \item[$\mathbf{(E_2)}$] $U(t,r)U(r,s)=U(t,s)$ for $s\leq r\leq t$; and
        \item[$\mathbf{(E_3)}$] For any $x\in\cB$, the map $(t,s)\mapsto U(t,s)x$ is continuous.
    \end{enumerate}
\end{definition}
The primary utility of evolution systems is in providing solutions to differential equations with time-dependent coefficients, akin to the fundamental (solution) matrix of finite-dimensional ODEs. Roughly, if $Y(t)$ is the fundamental matrix for a finite-dimensional ODE, the corresponding evolution system would take the form $U(t,s) = Y(t)Y(s)^{-1}$.

For a Banach space $\mathcal{B}$ and a family of operators $L(t):D\rightarrow\mathcal{B}$ on a uniform, dense domain $D\subset\mathcal{B}$, consider the initial value problem
\begin{equation}
    \dot y(t)=L(t)y, \qquad y(0)=y_0\in D.\label{eq:inftyfloquet}
\end{equation}
We say that~\eqref{eq:inftyfloquet} `admits an evolution system $U(t,s)$' if, for any $x\in D$, we have 
\begin{equation}\label{eq:evolutionsystem}
    \begin{aligned}
    \partial_t U(t,s)x &= L(t)U(t,s)x,\\
    \partial_s U(t,s)x &= -U(t,s)L(s)x,
    \end{aligned}
\end{equation}
for $0\leq s\leq t\leq T$. In this setting, we say that $y(t) = U(t,0)y_0$ is a `classical solution' to~\eqref{eq:inftyfloquet}. Following \citet[Theorem 6.1]{Pazy1963}, we propose the following criteria for the equation~\eqref{eq:inftyfloquet} to admit a unique evolution system. The following can be seen as an analogue to the classical Hille--Yosida theorem \citep{Reed1975} for time-dependent parabolic PDEs:
\begin{proposition}\label{prop:criteria}
    The equation~\eqref{eq:inftyfloquet} admits a unique evolution system $U(t,s)$ if the following conditions are met:
    \begin{enumerate}
        \item[$\mathbf{(P_1)}$] The domain $D$ is dense in $\mathcal{B}$ and independent of $t$.
        \item[$\mathbf{(P_2)}$]  There are constants $c,M>0$ such that for all $t\in[0,T]$ and for all $\lambda$ with $\Re~\lambda\geq c$, the resolvent $R(\lambda:L(t))\doteq (L(t)-\lambda)^{-1}$ exists and has operator norm\footnote{Recall that the \emph{operator norm} of a bounded operator $L:\cB\to\cB$ is the smallest $C>0$ such that, for all $y\in\cB$ with $\|y\|\leq 1$, we have $\|Ly\|\leq C$.} bounded as
        \[\norm{R(\lambda:L(t))}\leq\frac{M}{|\lambda-c|+1}.\]
        \item[$\mathbf{(P_3)}$]  The map $t\mapsto L(t)$ is H\"older continuous in operator norm\footnote{Specifically, this means that constants $C,\alpha>0$ exist such that, for any times $t_1$ and $t_2$, the operator $L(t_2)-L(t_1)$ has operator norm bounded by $\|L(t_2)-L(t_1)\|\leq C|t_2-t_1|^{\alpha}$.}. 

    \end{enumerate}
\end{proposition}
\begin{remark}
    The requirement $\mathbf{(P_3)}$ can be significantly weakened, although it is not necessary for the application to walking droplet dynamics. In general, the proposition still holds if $(\mathbf{P_3})$ is weakened to
    \[\normno{(L(t) - L(s))(L(\tau)-c)^{-1}}\leq C|t-s|^\alpha,\]
    for constants $C>0$ and $\alpha\in(0,1]$ and all $t,s,\tau\in[0,T]$.
\end{remark}

\begin{proof}
    Define $\tilde{L}(t)=L(t)-c$, which carries the same domain $D$ as $L(t)$. From $\mathbf{(P_2)}$, note that $\tilde{L}(t)$ has a resolvent $R(\zeta:\tilde{L}(t))$ defined for all $\zeta\geq 0$. So, for all $\zeta=\lambda-c\geq 0$, we have
    \[\mathbf{(P_2)'}:\quad\normno{R(\zeta:\tilde{L}(t))} = \norm{R(\lambda:L(t))}\leq\frac{M}{|\zeta|+1},\]
    for some constant $M>0$. Now, $\mathbf{(P_3)}$ implies that $\tilde{L}(t)$ is also H\"older continuous, so we have
    \begin{align*}
    \mathbf{(P_3)'}:\quad&\normno{(\tilde{L}(t) - \tilde{L}(s))\tilde{L}(\tau)^{-1}}\\
    &\qquad\leq \normno{\tilde{L}(t) - \tilde{L}(s)}\normno{\tilde{L}(\tau)^{-1}}\leq CM|t-s|^\alpha,
    \end{align*}
    for constants $C>0$ and $s,t,\tau\in[0,T]$. The conditions $\mathbf{(P_1)}$, $\mathbf{(P_2)'}$, and $\mathbf{(P_3)'}$ imply that $\tilde{L}(t)$ admits a unique evolution system $\tilde{U}(t,s)$ \citep[Theorem~6.1]{Pazy1963}.

    Now, let $U(t,s)=e^{c(t-s)}\tilde{U}(t,s)$; we now show that $U(t,s)$ is the unique evolution system that $L(t)$ admits. Note that $\mathbf{(E_1)}$ is satisfied since $U(s,s)=\tilde{U}(s,s)=1$, $\mathbf{(E_2)}$ is satisfied since $U(t,r)U(r,s)=e^{c(t-r)}e^{c(r-s)}\tilde{U}(t,r)\tilde{U}(r,s)=e^{c(t-s)}\tilde{U}(t,s)=U(t,s)$, and $\mathbf{(E_3)}$ is satisfied since strong continuity is preserved when multiplying by the continuous scalar function $e^{c(t-s)}$. Moreover,
    \begin{align*}
    \frac{\partial}{\partial t}U(t,s)&=e^{c(t-s)}\frac{\partial}{\partial t}\tilde{U}(t,s)+ce^{c(t-s)}\tilde{U}(t,s)\\&=(\tilde{L}(t)+c)e^{c(t-s)}\tilde{U}(t,s)=L(t)U(t,s),\\
    \frac{\partial}{\partial s}U(t,s)&=e^{c(t-s)}\frac{\partial}{\partial s}\tilde{U}(t,s)-ce^{c(t-s)}\tilde{U}(t,s)\\&=-e^{c(t-s)}\tilde{U}(t,s)(\tilde{L}(s)+c)=-U(t,s)L(s),
    \end{align*}
    so $L(t)$ indeed admits the evolution system $U(t,s)$. That $U(t,s)$ is unique follows from running the same logic backwards; if there were a distinct evolution system $U'(t,s)$ for $L(t)$, we could construct a distinct evolution system $e^{-c(t-s)}U'(t-s)\neq \tilde{U}(t-s)$ for $\tilde{L}(t)$.
\end{proof}

The conditions in Proposition~\ref{prop:criteria} cover a wide range of physically-relevant wave operators. As an example, we consider the operator found in the model of \citet{milewski_galeano-rios_nachbin_bush_2015}:

\begin{example}\label{ex:molacek}
    In the equation~\eqref{eq:dureyeqs} of Milewski \emph{et al.}, the wave operator is given as follows:\begin{equation}\label{eq:operator}
\dave{L(t)=\begin{pmatrix}
    2\nu\nabla^2 & -g(t) + (\sigma/\rho)\nabla^2\\
    \operatorname{DtN} &  2\nu\nabla^2
    \end{pmatrix}.}
\end{equation}
    If we consider the case $\cB=L^\infty(\RR^2;\RR^2)$ of bounded flow fields on $\RR^2$, the domain of $L(t)$ would be $D=W^{\infty,2}(\RR^2;\RR^2)$, independent of $t$. As discussed before, this subspace is dense in $\mathcal{B}$, so it satisfies $\mathbf{(P_1)}$. The spectral condition $\mathbf{(P_2)}$ follows from splitting $L(t)$ into a time-independent component $L_0=L|_{\cos 4\pi t = 0}$ and a time-dependent component $L_1(t)=L(t)-L_0$. The spectrum of the former---which corresponds to a non-vibrating fluid bath---is supported in the negative half-plane, so $L_0$ satisfies $\mathbf{(P_2)}$ with $c=0$. Applying the perturbation result~\citep[Section 4, Theorem 3.17]{Kato1995}, one can see that $L(t)$ satisfies $\mathbf{(P_2)}$ with a potentially larger $c$. The equation satisfies $\mathbf{(P_3)}$ because the only time-dependent term is uniformly continuous; hence, $L(t)$ itself is H\"older continuous.
\end{example}

Even though Floquet theory does not generally apply in this setting, as discussed above, one can construct a practical analogue of Floquet's theorem suitable for our purposes:

\begin{lemma}\label{lem:floquetbanach}
    Suppose the differential equation~\eqref{eq:inftyfloquet} admits a unique evolution system $U(t,s)$, and suppose the operators $L(t)$ are $T$-periodic. Then, for any $n\in\{0,1,2,...\}$ and $t_0\in[0,T]$, we have
    \[U(nT+t_0,0) = C(t_0)^nU(t_0,0) = U(t_0,0)C_0^n,\]
    for fixed, bounded operators $C(t_0)$ and $C_0$ on $\cB$, where $C_0$ is independent of $t_0$. Equivalently, $y(nT+t_0) = C(t_0)^ny(t_0) = U(t_0,0)C_0^ny(0)$ for any classical solution y.
\end{lemma}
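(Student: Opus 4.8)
The plan is to mirror the finite-dimensional argument of Lemma~\ref{lem:floquetsol}, with the crucial difference that the periodic structure of the monodromy operator will be extracted not from an explicit Floquet decomposition---which need not exist here---but from the \emph{uniqueness} of the evolution system guaranteed by Proposition~\ref{prop:criteria}.

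\textbf{Step 1 (time-translation invariance).} The heart of the proof is to establish that $U(t+T,s+T) = U(t,s)$. To this end, I would fix the shifted family $V(t,s) \doteq U(t+T, s+T)$ and verify that it is itself an evolution system for~\eqref{eq:inftyfloquet}. Properties $\mathbf{(E_1)}$--$\mathbf{(E_3)}$ transfer immediately from $U$; for the differential relations~\eqref{eq:evolutionsystem}, one differentiates and invokes the $T$-periodicity $L(t+T)=L(t)$ to obtain $\partial_t V(t,s)x = L(t+T)V(t,s)x = L(t)V(t,s)x$ and likewise $\partial_s V(t,s)x = -V(t,s)L(s)x$. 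Since~\eqref{eq:inftyfloquet} admits a \emph{unique} evolution system, we conclude $V = U$, i.e.\ $U(t+T,s+T)=U(t,s)$. Implicitly one extends the evolution system from $[0,T]$ to each horizon $[0,NT]$ by reapplying Proposition~\ref{prop:criteria}, whose hypotheses hold uniformly in $t$ by periodicity.

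\textbf{Step 2 (monodromy operators and telescoping).} I would then set $C_0 \doteq U(T,0)$ and $C(t_0)\doteq U(t_0+T,t_0)$, which are bounded by Definition~\ref{def:evolutionsystem}. Using the cocycle law $\mathbf{(E_2)}$ to chain across consecutive periods and Step 1 to identify each factor, each one-period propagator $U((k+1)T+t_0,\,kT+t_0)$ equals $C(t_0)$, so telescoping yields $U(nT+t_0,t_0)=C(t_0)^n$; the special case $t_0=0$ gives $U(nT,0)=C_0^n$. Composing once more with $U(t_0,0)$ via $\mathbf{(E_2)}$ produces the first claimed identity, $U(nT+t_0,0)=U(nT+t_0,t_0)U(t_0,0)=C(t_0)^nU(t_0,0)$. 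For the second, I would instead split as $U(nT+t_0,0)=U(nT+t_0,nT)U(nT,0)$ and apply Step 1 ($n$-fold) to rewrite $U(nT+t_0,nT)=U(t_0,0)$, giving $U(t_0,0)C_0^n$. Consistency of the two forms is equivalent to the intertwining relation $C(t_0)U(t_0,0)=U(t_0,0)C_0$, which itself follows from $\mathbf{(E_2)}$ and Step 1.

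\textbf{Main obstacle.} I expect Step 1 to be the only substantive point: it is exactly here that the infinite-dimensional argument departs from classical Floquet theory, since we cannot write $U(t,s)$ in the explicit form $Q(t)e^{B(t-s)}Q(s)^{-1}$. Everything downstream---the telescoping and the two factorizations---is then a formal consequence of the evolution-system axioms. A secondary technical care is needed in extending the evolution system past the interval $[0,T]$ on which Definition~\ref{def:evolutionsystem} is stated; this is harmless because the periodicity of $L(t)$ makes the hypotheses of Proposition~\ref{prop:criteria} hold on every $[0,NT]$, so the extended system exists, is unique, and inherits $\mathbf{(E_1)}$--$\mathbf{(E_3)}$.
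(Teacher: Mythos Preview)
Your proposal is correct and matches the paper's proof essentially line for line: both establish $U(t+T,s+T)=U(t,s)$ by checking that the shifted family is again an evolution system and invoking uniqueness, then define $C(t_0)=U(t_0+T,t_0)$ and $C_0=U(T,0)$ and obtain the two factorizations by telescoping the cocycle law $\mathbf{(E_2)}$. Your additional remarks on the intertwining relation and the extension of $U$ past $[0,T]$ are not in the paper's proof but are harmless elaborations.
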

\begin{proof}
    Consider the family of operators $U(t+T,s+T)$. Since $L(t)$ is $T$-periodic, $U(t+T,s+T)$ satisfies~\eqref{eq:evolutionsystem}, so $U(t+T,s+T)$ is an evolution system of~\eqref{eq:inftyfloquet}. However, our evolution system is unique by hypothesis, so we have $U(t,s)=U(t+T,s+T)$. Hence,
    \begin{align*}
        U((n+1)T+t_0,0)&=U((n+1)T+t_0,nT+t_0)U(nT+t_0,0)\\
        &=U(T+t_0,t_0)U(nT+t_0,0)\\
        &=C(t_0)U(nT+t_0,0),
    \end{align*}
    noting that $C(t_0)\doteq U(t_0+T,t_0)$ depends only on $t_0$. Similarly, we have
    \begin{align*}
        U((n+1)T+t_0,0)&=U((n+1)T+t_0,T)U(T,0)\\
        &=U(nT+t_0,0)U(T,0)\\
        &=U(nT+t_0,0)C_0,
    \end{align*}
    where $C_0\doteq U(T,0)$ is independent of $t_0$. The lemma follows inductively.
\end{proof}

As in the finite-dimensional case, we can leverage the decomposition provided by Lemma~\ref{lem:floquetbanach} to characterize solutions of the inhomogeneous equation:
\begin{lemma}\label{lem:inftyfloquetstep}
    Suppose $L(t)$ is $T$-periodic and satisfies the criteria of Proposition~\ref{prop:criteria}, and that $b(t)\in D$ is a $T$-periodic, H\"older continuous family of vectors. Then there is a unique solution to~\eqref{eq:inftyfloquet}, and it satisfies
    \begin{equation}
        y\left((n+1)T+t_0\right) = C(t_0)y(nT+t_0) + R(t_0)\label{eq:inftysol}
    \end{equation}
    for any $n\in\{0,1,2,...\}$ and $t_0\in[0,T]$; here, $C(t_0)$ is as given in Lemma~\ref{lem:floquetbanach} and $R(t_0)\in \cB$ is a constant vector.
\end{lemma}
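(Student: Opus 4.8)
The plan is to mirror the finite-dimensional argument of Lemma~\ref{lem:floquetsol}, replacing the explicit Floquet factorization $Y(t)=Q(t)e^{Bt}$ with the evolution system $U(t,s)$ furnished by Proposition~\ref{prop:criteria} and organized into periodic blocks by Lemma~\ref{lem:floquetbanach}. The key observation is that the inhomogeneous equation~\eqref{eq:inftyfloquet} with forcing $b$ should be solved by a Duhamel (variation-of-parameters) formula, just as the finite-dimensional solution was $z(t)=e^{Bt}z_0+e^{Bt}\int_0^t e^{-Bs}Q^{-1}(s)b(s)\,ds$; in the abstract setting this becomes
\begin{equation}
    y(t) = U(t,0)y(0) + \int_0^t U(t,s)b(s)\,ds.\label{eq:duhamel}
\end{equation}

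\textbf{First}, I would establish existence and uniqueness of a classical solution to the inhomogeneous problem. Existence follows by verifying that~\eqref{eq:duhamel} solves the equation: differentiating in $t$ and using the first identity in~\eqref{eq:evolutionsystem}, together with $U(t,t)=1$ from $\mathbf{(E_1)}$, produces $\dot y = L(t)y + b(t)$. The hypothesis that $b(t)\in D$ is H\"older continuous is exactly what is needed to guarantee that the Duhamel integral lands in $D$ and is differentiable (this is the standard regularity requirement for inhomogeneous parabolic evolution equations in~\citet{Pazy1963}); uniqueness follows because the difference of two solutions solves the homogeneous equation with zero initial data, which the uniqueness of the evolution system forces to vanish.

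\textbf{Second}, I would iterate~\eqref{eq:duhamel} over a single period. Writing the solution on $[nT+t_0,(n+1)T+t_0]$ and applying the evolution-system composition property $\mathbf{(E_2)}$ to split $U((n{+}1)T+t_0,s)=U((n{+}1)T+t_0,nT+t_0)U(nT+t_0,s)$, one factors out $C(t_0)=U(t_0+T,t_0)$ from Lemma~\ref{lem:floquetbanach} and collects the remaining integral over one period into a constant. Concretely, I expect to arrive at
\begin{equation}
    R(t_0) = \int_{t_0}^{t_0+T} U(t_0+T,s)\,b(s)\,ds,\label{eq:Rdef}
\end{equation}
with the $T$-periodicity of $L$ and $b$---via the identity $U(t,s)=U(t+T,s+T)$ proven inside Lemma~\ref{lem:floquetbanach}---ensuring that this integral is independent of $n$, so that $R(t_0)$ depends only on $t_0$ as claimed. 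This is the exact abstract analogue of~\eqref{eq:r0}.

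\textbf{The main obstacle} I anticipate is purely analytic rather than structural: justifying that the Duhamel integral~\eqref{eq:duhamel} is genuinely differentiable and yields a classical (not merely mild) solution lying in $D$. In finite dimensions this was automatic, but for an unbounded generator the boundary term and the differentiation under the integral require the H\"older continuity of $b$ and the parabolic smoothing encoded in $\mathbf{(P_2)}$. I would therefore lean on the corresponding inhomogeneous-problem theorem in~\citet{Pazy1963} to certify regularity, and I would note that the vector $R(t_0)$ defined by~\eqref{eq:Rdef} is a well-defined element of $\cB$ because $U(t_0+T,\cdot)$ is uniformly bounded in operator norm on the compact interval $[t_0,t_0+T]$ (by $\mathbf{(E_3)}$ and the uniform boundedness principle) while $b$ is bounded there. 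Once regularity is secured, the algebraic iteration is a direct transcription of the finite-dimensional computation, and the lemma follows.
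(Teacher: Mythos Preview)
Your proposal is correct and follows essentially the same route as the paper: invoke Pazy's inhomogeneous result to obtain the Duhamel formula $y(t)=U(t,0)y_0+\int_0^t U(t,s)b(s)\,ds$, then use the periodicity identity $U(t+T,s+T)=U(t,s)$ together with $\mathbf{(E_2)}$ to factor out $C(t_0)$ and collapse the remaining integral to a single period. Your expression~\eqref{eq:Rdef} for $R(t_0)$ is in fact slightly cleaner than the paper's, which writes the same quantity as $C(t_0)\int_{t_0}^{t_0+T}U(t_0,r)b(r)\,dr$ with a mild abuse of notation.
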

\begin{proof}
    \dave{Following}~\citet[Theorem~4.1]{Pazy1963}, the solution to the initial value problem
    \[\dot y(t)=L(t)y+b(t),\qquad y(0)=y_0\in D\]
    exists, is unique, and is given by 
    \[y(t)=U(t,0)y_0+\int_{0}^t U(t,r)b(r)~dr.\]
    Setting $t=nT+t_0$, we have
    \begin{align*}
        y\left(t+T\right)&=U\left(t+T,0\right)y_0+\int_{0}^{t+T}U(t+T,r)b(r)~dr\\\ &=C(t_0)U(t,t_0)y(t_0)+C(t_0)\int_{0}^{t}U(t,r)b(r)~dr\\
        &\qquad+C(t_0)\int_{t}^{t+T}U(t,r)b(r)~dr\\
        &=C(t_0)y(t)+C(t_0)\int_{t_0}^{t_0+T}U(t_0,r)b(r)~dr,
    \end{align*}
    noting that $U(t+T,r)=C(t_0)U(t,r)$ and $U(t,s)=U(t+T,s+T)$. 
\end{proof}

Finally, we recover our primary result:
\begin{theorem}
    Suppose $X(t)\in\Omega\subset\CC^m$ is piecewise constant on each period $[k,k+T)$, for $k\in \{0,T,2T,...\}$, and undergoes an ergodic process with stationary probability density $\mu$. Suppose that $y(t)\in\mathcal{B}$ solves 
\begin{equation}\label{floquetdropx_infty}
\dot{y}=L(t)y+b(t,X(t)),
\end{equation}
where $L(t)$ is a $T$-periodic operator satisfying the conditions of Proposition~\ref{prop:criteria}, and $b(t,x)$ is $T$-periodic and H\"older continuous in $t$ for each fixed value $x\in\Omega$. Further suppose that the map $x\mapsto\int_0^T\|b(t,x)\|\,dt$ is uniformly bounded over $\Omega$, and that the map $C_0:y(0)\mapsto y(T)$ has norm $\|C_0\|<1$. Then, the mean field $\langle y\rangle \doteq\lim_{N\rightarrow\infty}N^{-1}\sum_{n=1}^N y(nT)$ is given by
\[\langle y\rangle = \int_{\Omega}\mu(X')y_0(X')dX',\]
where $y_0(X')\in\cB$ is the mean field when $X(t) \equiv X'$ is held constant.\label{thm:main_infty}
\end{theorem}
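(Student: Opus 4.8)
The plan is to mirror the proof of Theorem~\ref{thm:finitedim}, replacing the classical Floquet decomposition with the evolution-system machinery of Lemmas~\ref{lem:floquetbanach} and~\ref{lem:inftyfloquetstep}. The essential observation is that, although the forcing $b(t,X(t))$ is not globally $T$-periodic, the configuration $X(t)$ is constant on each period $[nT,(n+1)T)$, equal to some $X_n\in\Omega$. Restricting to that period, $y$ solves the inhomogeneous equation with the genuinely $T$-periodic forcing $b(\cdot,X_n)$, so Lemma~\ref{lem:inftyfloquetstep} (taken at $t_0=0$) applies and yields the one-step recursion
\begin{equation*}
y\bigl((n+1)T\bigr) = C_0\,y(nT) + R(X_n),
\end{equation*}
where $C_0 = U(T,0)$ is the monodromy operator of Lemma~\ref{lem:floquetbanach} and $R(X_n)\in\cB$ is the constant vector produced by that lemma, now carrying the dependence on $X_n$ through $b$. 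Before averaging I would record two boundedness facts: first, the hypothesis that $x\mapsto\int_0^T\|b(t,x)\|\,dt$ is uniformly bounded, together with the uniform boundedness of $U(t,s)$ for $0\le s\le t\le T$, shows $\sup_{X'\in\Omega}\|R(X')\|<\infty$; and second, iterating the recursion with $\|C_0\|<1$ gives $\|y(nT)\|\le \|C_0\|^n\|y(0)\| + \sup_{X'}\|R(X')\|/(1-\|C_0\|)$, so the iterates stay uniformly bounded.

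Next I would take the Ces\`aro mean of the recursion over $n=1,\dots,N$ and send $N\to\infty$. Uniform boundedness of $\{y(nT)\}$ makes the index shift on the left-hand side harmless in the limit, so the left side and the first right-hand term converge to $\langle y\rangle$ and $C_0\langle y\rangle$, respectively. For the forcing term I would invoke the ergodic theorem: since $R$ is bounded and $X_n$ samples the ergodic process with stationary density $\mu$, the averages $N^{-1}\sum_{n=1}^N R(X_n)$ converge to the Bochner integral $\int_\Omega R(X')\mu(X')\,dX'$. Combining these limits yields $(1-C_0)\langle y\rangle = \int_\Omega R(X')\mu(X')\,dX'$.

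Finally, since $\|C_0\|<1$, the operator $1-C_0$ is invertible on $\cB$ with bounded inverse given by the Neumann series $\sum_{k\ge 0}C_0^k$; applying it under the integral gives $\langle y\rangle = \int_\Omega (1-C_0)^{-1}R(X')\mu(X')\,dX'$. It then remains to identify the kernel: specializing to the stationary case $X(t)\equiv X'$ (equivalently $\mu=\delta(\cdot-X')$) turns the recursion into $y((n+1)T)=C_0 y(nT)+R(X')$, whose unique fixed point $(1-C_0)^{-1}R(X')$ is approached exponentially and therefore equals the mean field $y_0(X')$. Substituting $y_0(X')=(1-C_0)^{-1}R(X')$ produces the stated formula.

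I expect the main obstacle to be the vector-valued averaging step: one must justify that the Ces\`aro sum of the Banach-space-valued quantities $R(X_n)$ converges in norm to the Bochner integral against $\mu$. The cleanest route is to apply the scalar ergodic theorem to $\ell(R(X_n))$ for each bounded functional $\ell\in\cB^*$ and then upgrade to norm convergence using the uniform bound on $R$ together with continuity of $X'\mapsto R(X')$; some care is also needed to confirm that $R(X')$ inherits enough measurability and boundedness in $X'$ for the Bochner integral to be well-defined. Everything else is a faithful translation of the finite-dimensional argument, with the heavy analytic lifting already discharged by Proposition~\ref{prop:criteria} and Lemmas~\ref{lem:floquetbanach}--\ref{lem:inftyfloquetstep}.
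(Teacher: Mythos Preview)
Your proposal is correct and follows essentially the same approach as the paper: apply Lemma~\ref{lem:inftyfloquetstep} on each period to obtain the recursion $y((n+1)T)=C_0\,y(nT)+R(X_n)$, take Ces\`aro means, invoke the ergodic theorem on the forcing term, invert $1-C_0$ via $\|C_0\|<1$, and identify $(1-C_0)^{-1}R(X')$ with $y_0(X')$ through the stationary case. Your treatment is in fact more careful than the paper's in justifying the index shift via uniform boundedness of the iterates and in flagging the Bochner-integral subtlety for the vector-valued ergodic average.
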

\begin{proof}
    Without loss of generality, suppose $t\in\{0,T,2T,...\}$. As in the finite-dimensional case, Lemma~\ref{lem:inftyfloquetstep} gives us
    \begin{equation}
        y(t+T)=C_0y(t)+R(X(t)). \label{eq:inftyfloquetx}
    \end{equation}
    We can now take the mean value of both sides, noting that $R(X')$ is uniformly bounded in $X'$ from our uniform integrability assumption on $b$. Using the ergodic theorem to simplify, we have
    \[(1-C_0)\langle y\rangle=\int_{\Omega}R(X')\mu(X')~dX'.\]
    Since $\norm{C_0}<1$, we know that $1-C_0$ is invertible with continuous inverse. Hence,
    \[\langle y\rangle=\int_{\Omega}(1-C_0)^{-1}R(X')\mu(X')~dX'.\]
    The stationary case $X(t)=X'$ corresponds to the Dirac measure $\mu(X')=\delta(x-X')$, or $y_0(x,X')=(1-C_0)^{-1}R(X')$. Thus, $y_0(X')=(1-C_0)^{-1}R(X')$, proving the theorem.
\end{proof}

\bibliography{Bibliography}

\end{document}